\newtheorem{thm}{Theorem}
\newtheorem{cor}{Corollary}
\newtheorem{lem}{Lemma}
\newtheorem{example}{Example}
\newcounter{appendixsection}
\renewcommand{\theappendixsection}{\Alph{appendixsection}}
\newcommand{\appendixsection}[1]{%
  \refstepcounter{appendixsection}%
  \par
  % Ê¹ÓÃÓërevtex4 sectionÍêÈ«ÏàÍ¬µÄ¼ä¾àºÍ¸ñÊ½
  \vskip 3.5ex plus 1ex minus .2ex
  \noindent
  % Ê¹ÓÃrevtex4µÄsection×ÖÌåÑùÊ½
  {\normalsize\bfseries Appendix \theappendixsection. #1}%
  \nobreak
  \vskip 2.3ex plus .2ex
  \addcontentsline{toc}{section}{Appendix \theappendixsection. #1}%
  \hypertarget{appsec:\theappendixsection}{}%
  \label{appsec:\theappendixsection}%
}
\begin{document}

\begin{center}
		\bf{Imaginarity measures induced by real part states and the complementarity relations}%ºÚÌå

\end{center}

\begin{center}
    Jingyan Liu$~^{1}$, Yue Sun$~^{2}$,  Jianwei Xu$~^{3,*}$, Ming-Jing Zhao$~^{1,\dagger}$

 \small $^{1}$School of Science, Beijing Information Science and Technology University, Beijing, 102206, P. R. China \\
 \small $^{2}$School of Mathematics, Nanjing University of Aeronautics and Astronautics, Nanjing, 210016, P. R. China \\
  \small $~^{3}$ {School of Mathematics and Statistics, Shaanxi Normal University, Xi¡¯an, 710119, P. R. China\\}

 \small $^{*}$Corresponding author: {xujianwei@snnu.edu.cn}  \\
 \small $^{\dagger}$Corresponding author: {zhaomingjingde@126.com}

\end{center}

%\begin{abstract}
Complex numbers are indispensable in quantum mechanics and the resource theory of imaginarity has been developed recently. In this paper, we propose a method to construct imaginary measures by real part states. Specifically, we propose an imaginarity measure in terms of fidelity and explore its properties. The analytical expression of the imaginarity measure is presented in qubit systems.
The relations between the proposed imaginarity measure and some other imaginarity measures (such as geometric imaginarity, Tsallis relative entropy imaginarity and trace norm imaginarity) are derived. The complementarity relations of the imaginarity measure under a complete set of mutually unbiased bases are provided in low-dimensional systems. This work not only highlights the prominent role of the real part state in the imaginarity resource theory, but also reveals  the constraint of imaginarity on a complete set of mutually unbiased bases physically.

\section{Introduction}
Complex numbers are indispensable in quantum mechanics. The  finding about the experimental scenario that cannot be exactly modeled using quantum theory with real-valued amplitudes only \cite{Renou},
attracts much attention on the study of the imaginarity \cite{ZDLi,MCChen,DWu}. It was
shown that imaginarity has crucial
effects on certain discrimination tasks \cite{KDWu}, hiding and masking \cite{HZhu},  machine learning
\cite{Sajjan}, pseudorandomness \cite{Haug}, outcome statistics of linear optical experiments \cite{Jones}, Kirkwood-Dirac quasiprobability
distributions \cite{Budiyono,Budiyono-Dipojono,Budiyono-Agusta,Wagner}, weak-value theory \cite{Wagner-Gal}.

Along with these extensive applications, the imaginarity resource theory has been formulated theoretically \cite{A.H.2018}.
This theory identifies the resource as the presence of nonvanishing imaginary components in a quantum state's density matrix, emphasizing the fundamental role of complex numbers in quantum mechanics.
To quantify the imaginarity and
determine the resourcefulness of a given quantum state, different measures of imaginarity have been proposed, for example, the trace norm imaginarity \cite{A.H.2018}, the relative entropy imaginarity  \cite{S.N.X.2021}, the geometric  imaginarity  \cite{T.V.K2023}, the Tsallis relative entropy  imaginarity \cite{J.W.X.2024} and the imaginarity measure based on the generalized quantum Jensen-Shannon divergence \cite{P.Tian2025}. Besides to these explicit imaginarity measures, Ref. \cite{S.P.Du2025} develops two methods to construct imaginarity measures, one is by the convex roof construction and the other is by the least imaginarity of the input pure states under real operations.

In the imaginarity resource theory, for any quantum state $\rho$, it can induce the real part state  ${\rm Re}(\rho)$ which is derived by eliminating the imaginary part of $\rho$, that is, ${\rm Re}(\rho)=\frac{1}{2}(\rho+\rho^*)$ with superscript $*$ denoting the conjugation. A natural question is whether one can measure the imaginarity of an arbitrary state using its real part state  ${\rm Re}(\rho)$ straightforwardly. The answer appears to be affirmative, as further derivation of the original formulas for certain existing imaginarity measures-such as the trace norm measure \cite{A.H.2018} and the relative entropy measure \cite{S.N.X.2021}-reveals that  quantifying imaginarity by utilizing the real part state is available and reasonable. Following by this, another question is whether this kind of imaginarity measures is just valid for such special cases.  In this work, we develop a method to construct imaginarity measures by the real part state ${\rm Re}(\rho)$ generally. This indicates the real part state ${\rm Re}(\rho)$ characterizes the imaginarity essentially.
By this way, we propose a specific imaginarity measure and study its properties thoroughly. Further, we find the complementarity relation of imaginarity under a complete set of mutually unbiased bases (MUBs) in low-dimensional systems. This relation reveals the intrinsic property of imaginarity as the quantum resource physically.

The remainder of this paper is organized as follows. In section II, we give a review of fundamental concepts regarding to imaginarity measures.
In section III, we  develop a method to construct imaginarity measures  by real part state ${\rm Re}(\rho)$. In particular,
we propose the imaginarity measure $M_{\mathrm{Re}}$  in terms of fidelity and study its properties.
We compare the imaginarity measure $M_{\mathrm{Re}}$ with geometric imaginarity measure, Tsallis relative entropy imaginarity and trace norm imaginarity in turn.
In section IV, we show  a complementarity relation for the imaginarity measure $M_{\mathrm{Re}}$ under a complete set of MUBs in low-dimensional systems. The conclusion is given in section V.

\section{Preliminaries}

Let  $\left\{|j\rangle \right \}_{j=0}^{d-1}$ be an orthonormal basis for $d$-dimensional Hilbert space $\mathcal{H}$.
In imaginarity resource theory, the quantum state $\rho$ is real state, if  the density matrix of $\rho$ under the reference basis is real, that is, $\rho=\displaystyle\sum_{ij}\rho_{ij}|i\rangle\langle j|$, where each $\rho_{ij}=\langle i|\rho |j \rangle \in \mathbb{R}$. The set of real states is denoted as $\cal{R}$. Let $\Phi$ be a quantum operation with Kraus operators ${K_{l}}$, $\Phi(\cdot)=\sum_l K_{l} (\cdot) K_{l}^\dagger$.
$\Phi$ is called a real operation if $\langle m|K_{l}|n\rangle\in \mathbb{R}$ for all $l$, $m$ and $n$. A nonnegative function $M$ on quantum states is called an imaginarity measure if $M$ satisfies the following five conditions \cite{A.H.2018, WuKD2021, WuKD20212},

(M1) Faithfulness: $M(\rho)\geq0$ and $M(\rho)=0$ if and only if $\rho$ is real.

(M2) Monotonicity: $M(\Phi(\rho))\leq M(\rho)$ if $\Phi$ is a real operation.

(M3) Probabilistic monotonicity: $\sum_{l}\mathrm{Tr}(K_{l}\rho K_{l}^{T})M\left[\frac{K_{l}\rho K_{l}^{T}}{\mathrm{Tr}(K_{l}\rho K_{l}^{T})}\right]\leq M(\rho) $ if $\Phi(\cdot)=\sum_l K_{l} (\cdot) K_{l}^\dagger$ is a real operation with superscript $T$ denoting transpose.

(M4) Convexity: $M(\sum _{j}p_{j}\rho_{j})\leq \sum_{j}p_{j}M(\rho_{j})$ for any probability distribution $\{p_{j}\}$ and quantum states $\{\rho_j\}$.

(M5) Additivity for direct sum states: $ M[p\rho_{1}\oplus (1-p)\rho_{2}]=pM(\rho_{1})+(1-p)M(\rho_{2}),$
%\begin{equation}\label{1}
 % M[p\rho_{1}\oplus (1-p)\rho_{2}]=pM(\rho_{1})+(1-p)M(\rho_{2}),
%\end{equation}
where $p\in[0,1]$, $\rho_{1}$, $\rho_{2}$ are arbitrary quantum states.

Note that (M2), (M3) and (M4) together imply (M5) \cite{S.Xue2021}. Additionally, (M3) and (M4) can be derived from (M2) and (M5) \cite{S.Xue2021}. So to construct an imaginarity measure $M$, one needs to verify that it satisfies either conditions (M1), (M2) and (M5), or conditions (M1), (M2), (M3) and (M4).

There are some commonly used imaginarity measures. For example, the trace norm imaginarity
\cite{A.H.2018} is
\begin{equation}\label{eq tr im def}
 M_{\mathrm{tr}}(\rho)=\min_{\sigma\in {\cal R}} ||\rho-\sigma||_{\mathrm{tr}}  =\frac{1}{2}||\rho-\rho^*||_{\mathrm{tr}},
\end{equation}
with the trace norm $||A||_{\mathrm{tr}}=\mathrm{Tr}\sqrt{A^\dagger A}$ and the corresponding optimal real state is ${\rm Re}(\rho)$ \cite{A.H.2018}.
The relative entropy imaginarity \cite{S.N.X.2021} is
\begin{equation}\label{eq rel im def}
M_{\mathrm{rel}}(\rho)=\min_{\sigma\in {\cal R}} S(\rho || \sigma)=S(\mathrm{Re}(\rho))-S(\rho),
\end{equation}
where $S(\rho)=-\mathrm{Tr}[\rho \mathrm{log}\rho]$ is the von Neumann entropy, and $S(\rho || \sigma) =\mathrm{Tr}(\rho\log \rho)-\mathrm{Tr}(\rho\log \sigma)$.
The corresponding optimal real state in Eq. (\ref{eq rel im def}) is also ${\rm Re}(\rho)$.
The geometric imaginarity \cite{T.V.K2023} is
\begin{equation}\label{eq g im def}
M_{\mathrm{g}}(\rho)=1-\max_{\sigma\in {\cal R}} F^2(\rho,\sigma)=\frac{1}{2}(1-F(\rho,\rho^*)),
\end{equation}
with fidelity \cite{Nielsen}
\begin{equation}\label{eq fidelity}
F(\rho,\sigma)=\mathrm{Tr} \sqrt{\sqrt{\rho}\sigma\sqrt{\rho}}.
\end{equation}
The trace norm imaginarity,  relative entropy imaginarity and geometric imaginarity are all defined in terms of optimization. Fortunately, the optimizations are all solved and the analytical formulae are provided.

\section{A new method to construct imaginarity measure}
For any quantum states $\rho$ and $\sigma$ in a $d$-dimensional system,
suppose $D$ defined on any pair of states $ (\rho, \sigma)$ is a real function  satisfying
\begin{enumerate}[(1)]
 \item (Nonnegativity) $D(\rho,  \mathrm{Re}(\rho))\geq 0$ and $D(\rho,  \mathrm{Re}(\rho))=0$ if and only if $\rho= \mathrm{Re}(\rho)$.
\item (Contractive under CPTP maps)  $D(\Phi(\rho), \Phi(\sigma)) \leq D(\rho,\sigma)$ for arbitrary completely positive trace-preserving (CPTP) maps $\Phi$.
\item (Additive under direct sum states) $D(p_1\rho_1\oplus p_2\rho_2, p_1\sigma_1\oplus p_2\sigma_2) = p_1 D(\rho_1, \sigma_1)+ p_2 D(\rho_2, \sigma_2)$ for probability distributions $\{p_i\}$ and density operators $\{\rho_i\}$ and $\{\sigma_i\}$ supporting on the same subspace.
\end{enumerate}
Here the function $D$ is not necessarily a metric, because we do not demand that the function $D$ is symmetric about two components or obeys the triangle inequality.
Based on such function $D$, for any quantum state $\rho$, we define
    \begin{equation}\label{def rd}
M(\rho)=D(\rho, \mathrm{Re}(\rho)).
\end{equation}

\begin{thm}\label{th mea d}
$M(\rho)$ is a well-defined imaginarity measure.
\end{thm}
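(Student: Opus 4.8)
The plan is to verify conditions (M1), (M2) and (M5), which by the remark following the axiom list suffice to conclude that $M$ is an imaginarity measure. Faithfulness (M1) and additivity (M5) will follow essentially directly from properties (1) and (3) of $D$, while monotonicity (M2) requires one structural identity describing how the real-part operation interacts with real operations.

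First I would dispose of faithfulness. By property (1), $M(\rho)=D(\rho,\mathrm{Re}(\rho))\ge 0$, with equality if and only if $\rho=\mathrm{Re}(\rho)$. Since $\mathrm{Re}(\rho)=\frac{1}{2}(\rho+\rho^*)$, the equality $\rho=\mathrm{Re}(\rho)$ holds exactly when $\rho=\rho^*$, i.e.\ exactly when $\rho$ is real. This establishes (M1) at once.

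The core of the argument is monotonicity (M2). The key observation I would establish is that conjugation commutes with any real operation $\Phi(\cdot)=\sum_l K_l(\cdot)K_l^\dagger$: because each $K_l$ has real entries in the reference basis, $K_l^*=K_l$ and $K_l^\dagger=K_l^T$, whence $\Phi(\rho)^*=\sum_l K_l\rho^* K_l^\dagger=\Phi(\rho^*)$. Taking real parts then gives $\mathrm{Re}(\Phi(\rho))=\frac{1}{2}\bigl(\Phi(\rho)+\Phi(\rho^*)\bigr)=\Phi(\mathrm{Re}(\rho))$. Since a real operation is in particular a CPTP map, I can then apply the contractivity property (2) of $D$ to obtain
\[
M(\Phi(\rho))=D\bigl(\Phi(\rho),\mathrm{Re}(\Phi(\rho))\bigr)=D\bigl(\Phi(\rho),\Phi(\mathrm{Re}(\rho))\bigr)\le D(\rho,\mathrm{Re}(\rho))=M(\rho).
\]

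Finally, for additivity (M5), I would note that conjugation, and hence the real-part map, acts blockwise on a direct sum, so $\mathrm{Re}\bigl(p\rho_1\oplus(1-p)\rho_2\bigr)=p\,\mathrm{Re}(\rho_1)\oplus(1-p)\,\mathrm{Re}(\rho_2)$, with each $\rho_i$ and $\mathrm{Re}(\rho_i)$ supported in the $i$-th orthogonal summand. Property (3) then yields
\[
M\bigl(p\rho_1\oplus(1-p)\rho_2\bigr)=p\,D(\rho_1,\mathrm{Re}(\rho_1))+(1-p)\,D(\rho_2,\mathrm{Re}(\rho_2))=p\,M(\rho_1)+(1-p)\,M(\rho_2).
\]
The main obstacle is the commutation identity $\mathrm{Re}\circ\Phi=\Phi\circ\mathrm{Re}$ underlying (M2); once it is secured, the remaining conditions reduce to the defining properties of $D$. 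I would also double-check the subspace-support hypothesis in property (3), confirming that taking the real part does not move a block out of its reference-basis subspace.
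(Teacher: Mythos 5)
Your proposal is correct and follows essentially the same route as the paper's Appendix A: verify (M1), (M2), (M5), using the commutation identity $\mathrm{Re}(\Phi(\rho))=\Phi(\mathrm{Re}(\rho))$ for real operations and the blockwise action of the real-part map on direct sums. You actually spell out the justification of that commutation identity (via $K_l^*=K_l$) more explicitly than the paper does, which states it only as ``the property of real operation $\Phi$.''
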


We leave the proof in \hyperlink{app:A}{Appendix A}. Theorem \ref{th mea d} provides a new method to construct imaginarity measures by the real part states.
One advantage of this kind of imaginarity measures is the absence of optimization, so it may be convenient for calculation. Furthermore, the real part state $\mathrm{Re}(\rho)$ has less parameters compared with general quantum state.

Here we give a summary about the existing imaginarity measures with respect to the real part state $\mathrm{Re}(\rho)$ in Table \ref{table}, where the Tsallis relative entropy is defined as $D_{\alpha}(\rho||\sigma)=\frac{\mathrm{Tr}\left(\rho^{\alpha}\sigma^{1-\alpha}\right)-1}{\alpha-1}$ with $\alpha\in (0,1)$, and the generalized quantum Jensen-Shannon divergence (GQJSD) is defined as $J_{\alpha}(\rho,\sigma)=\frac{1}{2}\left(D_\alpha\left(\rho||\frac{\rho+\sigma}{2}\right)+D_\alpha\left(\sigma||\frac{\rho+\sigma}{2}\right)\right)$ \cite{P.Tian2025}. We can see that these
imaginarity measures are factually all specific cases of Theorem \ref{th mea d} and the corresponding functions $D$  are also listed.

\begin{table}[h]
\centering
\small
\setlength{\extrarowheight}{12pt}
\caption{A summary of the imaginarity measures with respect to the  real part state $\mathrm{Re}(\rho)$.}
\label{tab:measures}
\begin{tabular}{|c|c|c|}
\hline
 &  Imaginarity measures  & Function $D(\rho, \sigma)$ \\  \hline
Trace norm, Robustness  &$ ||\rho-\mathrm{Re}(\rho)||_{\mathrm{tr}}$ \cite{A.H.2018} \cite{WuKD20212} &  $  ||\rho-\sigma||_{\mathrm{tr}}$ \\        
\hline      
Relative entropy &
$S(\rho||\mathrm{Re}(\rho))$
\cite{S.N.X.2021} &
 $S(\rho||\sigma)$ \\        
 \hline
Tsallis relative entropy  & $D_{\alpha}\left(\rho||\mathrm{Re}(\rho)\right)$ \cite{J.W.X.2024} \cite{P.Tian2025} & $\frac{\mathrm{Tr}\left(\rho^{\alpha}\sigma^{1-\alpha}\right)-1}{\alpha-1}$ \\  
\hline
{The GQJSD}  & $J_{\alpha}(\rho,\mathrm{Re}(\rho))$ \cite{P.Tian2025} & $\frac{1}{2}\left(D_{\alpha}\left(\rho||\frac{\rho+\sigma}{2}\right)+D_{\alpha}\left(\sigma||\frac{\rho+\sigma}{2}\right)\right)$ \\  
\hline
\end{tabular}
\label{table}
\end{table}

Now  for any quantum states $\rho$ and $\sigma$,
if we specify the function $D$ as
$D(\rho,\sigma)=1-F(\rho,\sigma)$, one can verify that this function meets the three conditions required.
Subsequently, we derive the imaginarity measure
 \begin{equation}\label{eq def mre}
        M_{\mathrm{Re}}(\rho)=1-F(\rho,\mathrm{Re}(\rho)).
    \end{equation}
In \hyperlink{app:B}{Appendix B}, we shall prove that the imaginarity measure
$M_{\mathrm{Re}}(\rho)$ satisfies the following properties.

\begin{thm}\label{th properties of mre}
     $M_{\mathrm{Re}}(\rho)$ is a well-defined imaginarity measure and satisfies the following properties.
     \begin{enumerate}[(1)]
         \item $ 0\leq M_{\mathrm{Re}}(\rho)\leq 1-\frac{\sqrt{2}}{2}$. $ M_{\mathrm{Re}}(\rho)=0$ if and only if $\rho$ is real. The upper bound $1-\frac{\sqrt{2}}{2}$ is attainable when $|\psi\rangle$ is the maximally imaginary state $|+ \rm i\rangle=\frac{|0\rangle + {\rm i} |1\rangle}{\sqrt{2}}$.
         %and $|- \rm i\rangle=\frac{|0\rangle - {\rm i} |1\rangle}{\sqrt{2}}$.
         \item For any pure state  $|\psi\rangle$,
         \begin{equation}  \label{eq Mre pure}
             M_{\mathrm{Re}}(|\psi\rangle)=1- \frac{1}{\sqrt{2}}\sqrt{{1+ |\langle \psi|\psi^*\rangle|^2}}.
         \end{equation}
         \item For any pure state $|\psi \rangle$, the geometric imaginarity $M_{\mathrm{g}}(\rho)$ and the imaginarity measure $M_{\mathrm{Re}}(\rho)$ satisfy
         \begin{equation}  \label{eq relation of MRe and Mg}
             M_{\mathrm{Re}}(|\psi\rangle)=1- \frac{1}{\sqrt{2}}\sqrt{{1+(1-2M_g(|\psi\rangle))^2}}.
         \end{equation}
         \item
       For any set $\{\Pi_i\}_{i=0}^{d-1}$ of rank-$1$ projection operators with real entries,
        \begin{eqnarray}
       M_{{\rm Re}}\left( \sum_i \Pi_i \rho \Pi_i\right) = \sum\limits_{i} p_i  M_{{\rm Re}}  \left(\rho_{i}\right)
    \end{eqnarray}
    with  $\rho_{i}=\frac{1}{p_i}\Pi_i \rho \Pi_i$, $p_i=\mathrm{Tr}(\Pi_i \rho \Pi_i)$, $i=0,1,\cdots, d-1$.
     \end{enumerate}
\end{thm}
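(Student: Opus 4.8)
The plan is to first confirm that $M_{\mathrm{Re}}$ is a genuine imaginarity measure and then to verify the four properties in order, exploiting throughout the way the Uhlmann fidelity simplifies on pure states. For the measure status I would invoke Theorem~\ref{th mea d} by checking that $D(\rho,\sigma)=1-F(\rho,\sigma)$ obeys its three hypotheses. Nonnegativity follows from $0\le F(\rho,\sigma)\le 1$ together with $F(\rho,\sigma)=1\iff\rho=\sigma$, so that $D(\rho,\mathrm{Re}(\rho))=0$ exactly when $\rho=\mathrm{Re}(\rho)$. Contractivity follows from the data-processing inequality $F(\Phi(\rho),\Phi(\sigma))\ge F(\rho,\sigma)$ for CPTP $\Phi$, which flips to $1-F(\Phi(\rho),\Phi(\sigma))\le 1-F(\rho,\sigma)$. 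Direct-sum additivity follows from $F(\bigoplus_i p_i\rho_i,\bigoplus_i p_i\sigma_i)=\sum_i p_i F(\rho_i,\sigma_i)$, since then $1-\sum_i p_i F_i=\sum_i p_i(1-F_i)$ as $\sum_i p_i=1$. Theorem~\ref{th mea d} then delivers that $M_{\mathrm{Re}}$ satisfies (M1)--(M5).

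For property (2) I would use that a rank-$1$ projector is its own square root, so the fidelity collapses to $F(|\psi\rangle\langle\psi|,\sigma)=\sqrt{\langle\psi|\sigma|\psi\rangle}$. Writing $\mathrm{Re}(|\psi\rangle\langle\psi|)=\tfrac12(|\psi\rangle\langle\psi|+|\psi^*\rangle\langle\psi^*|)$ and evaluating $\langle\psi|\mathrm{Re}(\rho)|\psi\rangle=\tfrac12(1+|\langle\psi|\psi^*\rangle|^2)$ gives the stated formula. Property (1) then splits into three pieces: the lower bound and the faithfulness clause are exactly (M1); the upper bound on pure states is read off from (2) using $|\langle\psi|\psi^*\rangle|^2\ge 0$, forcing $M_{\mathrm{Re}}(|\psi\rangle)\le 1-\tfrac{1}{\sqrt2}$; and this bound is extended to all states by convexity (M4), since any decomposition $\rho=\sum_j p_j|\psi_j\rangle\langle\psi_j|$ yields $M_{\mathrm{Re}}(\rho)\le\sum_j p_j M_{\mathrm{Re}}(|\psi_j\rangle)\le 1-\tfrac{\sqrt2}{2}$. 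Attainability is checked directly on $|{+}\mathrm{i}\rangle$, where $\langle\psi|\psi^*\rangle=0$.

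Property (3) is then immediate. Both $|\psi\rangle\langle\psi|$ and its conjugate are pure, so $F(\rho,\rho^*)=|\langle\psi|\psi^*\rangle|$ and hence $M_g(|\psi\rangle)=\tfrac12(1-|\langle\psi|\psi^*\rangle|)$. Solving for $|\langle\psi|\psi^*\rangle|=1-2M_g(|\psi\rangle)$ and substituting into (2) produces the claimed relation.

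For property (4) the key observation is that each $\Pi_i$ has real entries, hence commutes with complex conjugation, giving $\mathrm{Re}\big(\sum_i\Pi_i\rho\Pi_i\big)=\sum_i\Pi_i\,\mathrm{Re}(\rho)\,\Pi_i$. Taking the $\Pi_i$ to be a complete family of orthogonal rank-$1$ real projectors (as forced by $i=0,\dots,d-1$ and normalization of $\sum_i\Pi_i\rho\Pi_i$), both $\sum_i\Pi_i\rho\Pi_i=\bigoplus_i p_i\rho_i$ and its real part are block diagonal in the same decomposition, so the direct-sum additivity of fidelity invoked above yields $F\big(\sum_i\Pi_i\rho\Pi_i,\mathrm{Re}(\sum_i\Pi_i\rho\Pi_i)\big)=\sum_i p_i F(\rho_i,\mathrm{Re}(\rho_i))$; subtracting from $1$ gives the claim. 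I expect the main care to be needed precisely here: one must confirm the real-projector/conjugation commutation and that the block structures line up so that fidelity factorizes. As a consistency check, in the rank-$1$ case each $\rho_i=\frac{1}{p_i}\Pi_i\rho\Pi_i$ is a real pure state, so both sides collapse to $0$.
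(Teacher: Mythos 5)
Your proposal is correct and follows essentially the same route as the paper for the measure property and for items (2)--(4): verifying the three hypotheses of Theorem~\ref{th mea d} for $D=1-F$, using $F(|\varphi\rangle\langle\varphi|,\rho)=\sqrt{\langle\varphi|\rho|\varphi\rangle}$ together with $\mathrm{Re}(|\psi\rangle\langle\psi|)=\tfrac12(|\psi\rangle\langle\psi|+|\psi^*\rangle\langle\psi^*|)$ for the pure-state formula, substituting $|\langle\psi|\psi^*\rangle|=1-2M_{\mathrm{g}}(|\psi\rangle)$ for item (3), and combining the commutation of real projectors with conjugation and the block additivity of fidelity for item (4). The one genuine divergence is the upper bound in item (1): the paper gets $M_{\mathrm{Re}}(\rho)\leq 1-\tfrac{\sqrt2}{2}$ by citing that $|{+}\mathrm{i}\rangle$ is the unique maximally imaginary state of the resource theory, so that monotonicity under real operations forces $M_{\mathrm{Re}}(\rho)\leq M_{\mathrm{Re}}(|{+}\mathrm{i}\rangle\langle{+}\mathrm{i}|)$; you instead read the bound off the pure-state formula (since $|\langle\psi|\psi^*\rangle|^2\geq 0$) and extend it to mixed states by convexity, which is legitimate because (M4) follows from (M2) and (M5) as noted in the preliminaries. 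Your version is more self-contained and elementary, at the cost of not exhibiting the structural reason (the golden-unit role of $|{+}\mathrm{i}\rangle$) that the paper leans on; both arguments are sound.
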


In the case of single qubit, it is straightforward to derive the analytical expression of imaginarity measure $M_{\mathrm{Re}}$.

\begin{thm} \label{th:4}
    For any qubit state $  \rho=\frac{1}{2} ({I}+\boldsymbol{r}\cdot\boldsymbol{\sigma})$
  %  \begin{equation}\label{eq rho bloch}
   %     \rho=\frac{1}{2} ({I}+\boldsymbol{r}\cdot\boldsymbol{\sigma})
    %\end{equation}
   with $\boldsymbol{r}=(r_x, r_y, r_z)\in \mathbb{R}^3$, $|\boldsymbol{r}|=\sqrt{r_x^2+ r_y^2+ r_z^2}\leq 1$, $\boldsymbol{\sigma}=(\sigma_x,\sigma_y,\sigma_z)$ with three Pauli operators, $\sigma_x=|0\rangle\langle 1| + |1\rangle\langle 0|$, $\sigma_y={\rm{i} }(|1\rangle\langle 0|-|0\rangle\langle 1|)$, $\sigma_z=|0\rangle\langle 0| - |1\rangle\langle 1|$, the analytical expression of imaginarity measure $M_{\mathrm{Re}}$ is
    \begin{equation}\label{eq mre 2}
        M_{\mathrm{Re}}(\rho)=1-\Bigg[\frac{\sqrt{(1-|\boldsymbol{r}|^2)(1-|\boldsymbol{r}|^2+r_y^2)}+1+|\boldsymbol{r}|^2-r_y^2}{2}\Bigg]^\frac{1}{2}.
    \end{equation}
\end{thm}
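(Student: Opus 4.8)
The plan is to reduce the problem to the fidelity between two qubit density matrices and then evaluate it through the Bloch-vector representation. First I would identify the real part state explicitly. Since $\sigma_x$ and $\sigma_z$ have real entries while $\sigma_y$ is purely imaginary, so that $\sigma_y^*=-\sigma_y$, conjugation flips only the $r_y$ component; hence
\begin{equation}
\mathrm{Re}(\rho)=\tfrac{1}{2}(\rho+\rho^*)=\tfrac{1}{2}\big(I+r_x\sigma_x+r_z\sigma_z\big),
\end{equation}
which is the qubit state with Bloch vector $\boldsymbol{s}=(r_x,0,r_z)$. Thus $M_{\mathrm{Re}}(\rho)=1-F(\rho,\mathrm{Re}(\rho))$ is just one minus the fidelity between the states with Bloch vectors $\boldsymbol{r}$ and $\boldsymbol{s}$.

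The key computational tool is the elementary identity that for any $2\times2$ positive semidefinite matrix $A$ one has $(\mathrm{Tr}\sqrt{A})^2=\mathrm{Tr}A+2\sqrt{\det A}$, which follows at once by writing the trace and determinant in terms of the nonnegative eigenvalues $\lambda_1,\lambda_2$ of $A$ as $(\sqrt{\lambda_1}+\sqrt{\lambda_2})^2=\lambda_1+\lambda_2+2\sqrt{\lambda_1\lambda_2}$. Applying this to $A=\sqrt{\rho}\,\mathrm{Re}(\rho)\sqrt{\rho}$, which is manifestly positive semidefinite, and using cyclicity of the trace and multiplicativity of the determinant in the form $\mathrm{Tr}\big(\sqrt{\rho}\,\mathrm{Re}(\rho)\sqrt{\rho}\big)=\mathrm{Tr}\big(\rho\,\mathrm{Re}(\rho)\big)$ and $\det\big(\sqrt{\rho}\,\mathrm{Re}(\rho)\sqrt{\rho}\big)=\det(\rho)\det(\mathrm{Re}(\rho))$, I obtain
\begin{equation}
F^2(\rho,\mathrm{Re}(\rho))=\mathrm{Tr}\big(\rho\,\mathrm{Re}(\rho)\big)+2\sqrt{\det(\rho)\det(\mathrm{Re}(\rho))}.
\end{equation}

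It then remains to substitute the Bloch-vector data. Using $\mathrm{Tr}\big(\rho\,\mathrm{Re}(\rho)\big)=\tfrac{1}{2}(1+\boldsymbol{r}\cdot\boldsymbol{s})$ with $\boldsymbol{r}\cdot\boldsymbol{s}=r_x^2+r_z^2=|\boldsymbol{r}|^2-r_y^2$, together with $\det(\rho)=\tfrac{1}{4}(1-|\boldsymbol{r}|^2)$ and $\det(\mathrm{Re}(\rho))=\tfrac{1}{4}(1-|\boldsymbol{s}|^2)=\tfrac{1}{4}(1-|\boldsymbol{r}|^2+r_y^2)$, I find
\begin{equation}
F^2(\rho,\mathrm{Re}(\rho))=\tfrac{1}{2}\Big(1+|\boldsymbol{r}|^2-r_y^2+\sqrt{(1-|\boldsymbol{r}|^2)(1-|\boldsymbol{r}|^2+r_y^2)}\Big),
\end{equation}
and taking the square root and subtracting from $1$ gives precisely (\ref{eq mre 2}). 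The computation is essentially mechanical, so I do not anticipate a genuine obstacle; the only point requiring a little care is the degenerate case where $\rho$, equivalently $\mathrm{Re}(\rho)$, is pure so that $\sqrt{\rho}$ is rank deficient. However, the identity $(\mathrm{Tr}\sqrt{A})^2=\mathrm{Tr}A+2\sqrt{\det A}$ holds verbatim for every positive semidefinite $2\times2$ matrix and the final expression is continuous in $\boldsymbol{r}$, so no separate argument is needed. I would expect the two reductions $\mathrm{Tr}(A)=\mathrm{Tr}(\rho\,\mathrm{Re}(\rho))$ and $\det(A)=\det(\rho)\det(\mathrm{Re}(\rho))$ to be the only steps worth spelling out, and even these are immediate.
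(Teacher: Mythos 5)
Your proposal is correct and follows essentially the same route as the paper: identify $\mathrm{Re}(\rho)$ as the Bloch vector with $r_y$ deleted, then apply the qubit fidelity formula $F^2(\rho,\sigma)=\mathrm{Tr}(\rho\sigma)+2\sqrt{\det(\rho)\det(\sigma)}$ and substitute $\det(\rho)=\tfrac{1}{4}(1-|\boldsymbol{r}|^2)$, $\det(\mathrm{Re}(\rho))=\tfrac{1}{4}(1-|\boldsymbol{r}|^2+r_y^2)$, $\mathrm{Tr}(\rho\,\mathrm{Re}(\rho))=\tfrac{1}{2}(1+|\boldsymbol{r}|^2-r_y^2)$. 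The only difference is that you derive the $2\times 2$ fidelity identity from the eigenvalue relation $(\mathrm{Tr}\sqrt{A})^2=\mathrm{Tr}A+2\sqrt{\det A}$, whereas the paper cites it.
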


We leave the proof in \hyperlink{app:C}{Appendix C}.
From Theorem \ref{th:4} we see that the imaginarity of $\rho$ increases with
the coefficient $|r_y|$ and the latter is confined by the length of the Bloch vector $|\boldsymbol{r}|$
(See FIG. \ref{fig:Visualization of the ternary function M_Re}).

\begin{figure}[htbp]
\centering

\begin{minipage}{0.48\textwidth}
    \centering
    \includegraphics[width=\linewidth]{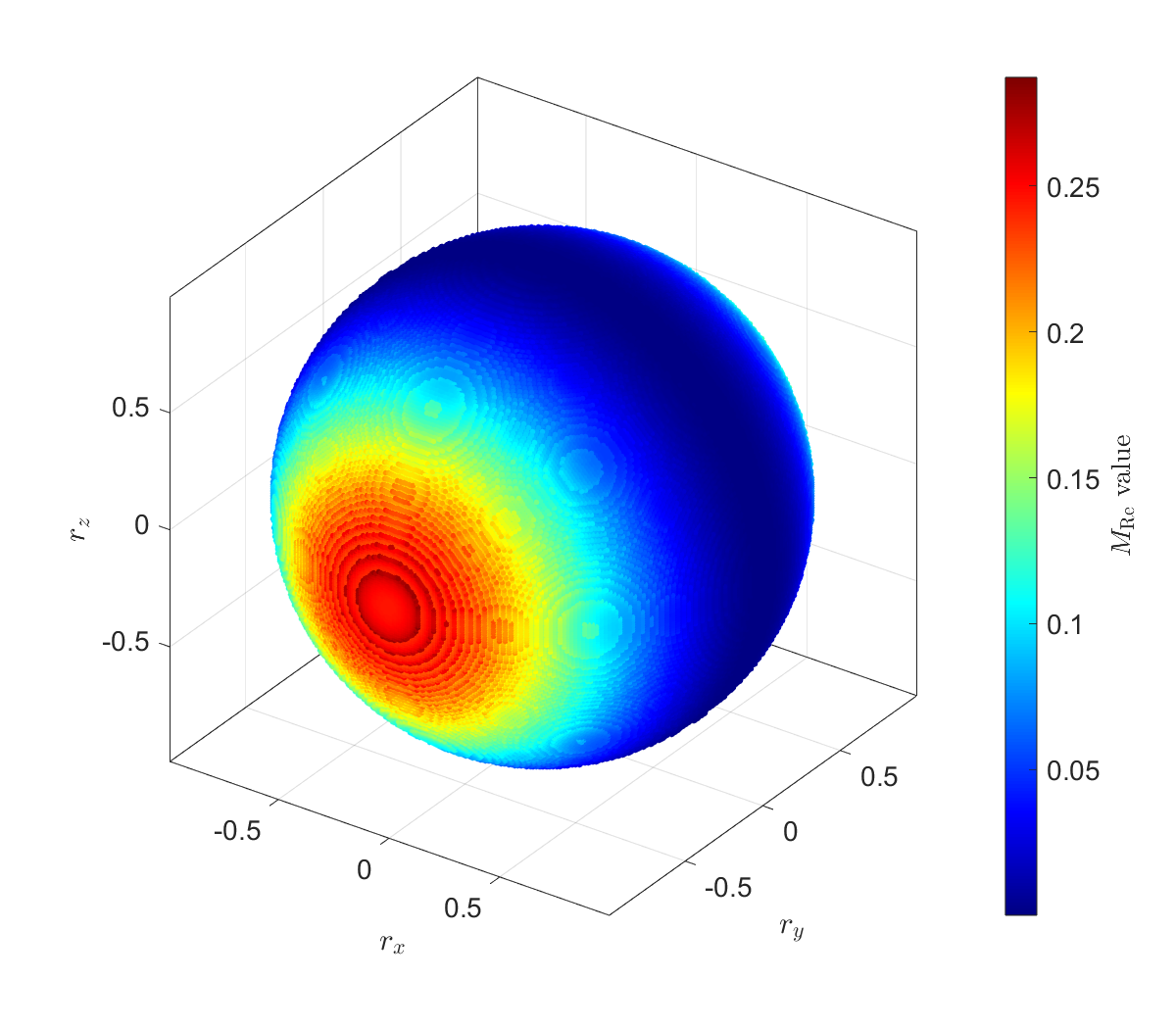}
    \\ (a)
\end{minipage}
\hfill
\begin{minipage}{0.48\textwidth}
    \centering
    \includegraphics[width=\linewidth]{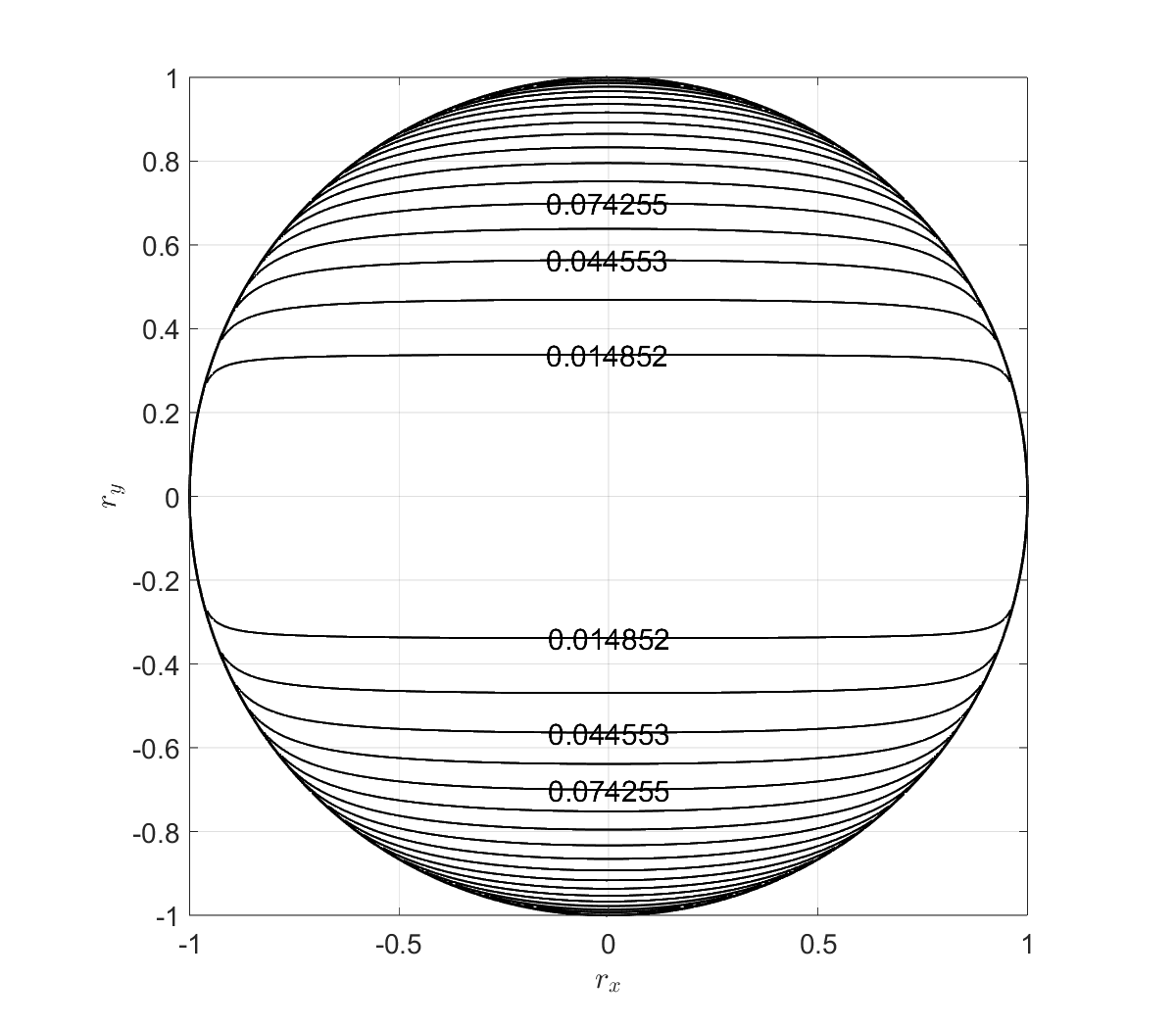}
    \\ (b)
\end{minipage}

\caption{(Color online)  The subfigure (a) is the visualization of the ternary function $M_{\mathrm{Re}}$ in Eq. (\ref{eq mre 2}) in qubit systems. The subfigure (b) is the contour plot of $M_{\mathrm{Re}}$ with $r_z=0$.}
\label{fig:Visualization of the ternary function M_Re}
\end{figure}

Obviously the geometric imaginarity $M_{\mathrm{g}}$ in Eq. (\ref{eq g im def}) and the imaginarity measure $ M_{\mathrm{Re}}$ in Eq. (\ref{eq def mre}) are both defined in form of fidelity. In fact, these two imaginarity measures are different but closely related.

\begin{thm}  \label{theo:geo measure}
    For any quantum state $\rho$, the geometric imaginarity $ M_{\mathrm{g}}(\rho)$ and the imaginarity measure $   M_{\mathrm{Re}}(\rho)$ satisfy
    \begin{equation}\label{eq mr mg}
        M_{\mathrm{Re}}(\rho) \leq M_{\mathrm{g}}(\rho) \leq 1-[1- M_{\mathrm{Re}}(\rho)]^2.
    \end{equation}
    The equality $ M_{\mathrm{Re}}(\rho) = M_{\mathrm{g}}(\rho)$ holds if and only if $\rho$ is a real state. And $M_{\mathrm{g}}(\rho)$ reaches the upper bound  $1-[1- M_{\mathrm{Re}}(\rho)]^2$ if and only if the optimal real state of $\rho$ with respect  to the geometric imaginarity $ M_{\mathrm{g}}(\rho)$ is the real part state ${\mathrm{Re}}(\rho)$.
\end{thm}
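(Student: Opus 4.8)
The plan is to express both quantities through the fidelity and to treat the two inequalities in Eq.~(\ref{eq mr mg}) separately, since they have quite different origins. Writing $M_{\mathrm{Re}}(\rho)=1-F(\rho,\mathrm{Re}(\rho))$ and, from Eq.~(\ref{eq g im def}), $M_{\mathrm{g}}(\rho)=1-\max_{\sigma\in\mathcal{R}}F^{2}(\rho,\sigma)=\tfrac12\bigl(1-F(\rho,\rho^{*})\bigr)$, I observe that the right-hand inequality is essentially a feasibility statement whereas the left-hand one is a concavity statement.

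For the upper bound I would simply note that $\mathrm{Re}(\rho)=\tfrac12(\rho+\rho^{*})$ is itself a real state, hence an admissible competitor in the maximization defining $M_{\mathrm{g}}$. This gives $\max_{\sigma\in\mathcal{R}}F^{2}(\rho,\sigma)\ge F^{2}(\rho,\mathrm{Re}(\rho))=[1-M_{\mathrm{Re}}(\rho)]^{2}$ and hence $M_{\mathrm{g}}(\rho)\le 1-[1-M_{\mathrm{Re}}(\rho)]^{2}$. Equality holds exactly when the maximum is attained at $\mathrm{Re}(\rho)$, which is precisely the stated condition that $\mathrm{Re}(\rho)$ is an optimal real state for $M_{\mathrm{g}}$.

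For the lower bound the key algebraic identity is $\sqrt{\rho}\,\mathrm{Re}(\rho)\,\sqrt{\rho}=\tfrac12(A+B)$ with $A=\sqrt{\rho}\,\rho\,\sqrt{\rho}=\rho^{2}$ and $B=\sqrt{\rho}\,\rho^{*}\,\sqrt{\rho}$, so that $F(\rho,\mathrm{Re}(\rho))=\mathrm{Tr}\sqrt{\tfrac12(A+B)}$. I would then invoke the strict concavity of the map $X\mapsto\mathrm{Tr}\sqrt{X}$ on positive semidefinite matrices (equivalently, the joint concavity of fidelity in its second argument) to obtain $\mathrm{Tr}\sqrt{\tfrac12(A+B)}\ge\tfrac12\mathrm{Tr}\sqrt{A}+\tfrac12\mathrm{Tr}\sqrt{B}$. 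Since $\mathrm{Tr}\sqrt{A}=\mathrm{Tr}\,\rho=1$ and $\mathrm{Tr}\sqrt{B}=F(\rho,\rho^{*})$, this reads $F(\rho,\mathrm{Re}(\rho))\ge\tfrac12\bigl(1+F(\rho,\rho^{*})\bigr)=1-M_{\mathrm{g}}(\rho)$, which is equivalent to $M_{\mathrm{Re}}(\rho)\le M_{\mathrm{g}}(\rho)$.

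It remains to settle the equality $M_{\mathrm{Re}}(\rho)=M_{\mathrm{g}}(\rho)$. The direction where $\rho$ is real is immediate, since then $\mathrm{Re}(\rho)=\rho=\rho^{*}$ and both measures vanish. For the converse, strict concavity forces $A=B$, i.e. $\sqrt{\rho}\,(\rho-\rho^{*})\,\sqrt{\rho}=0$. When $\rho$ is full rank $\sqrt{\rho}$ is invertible and $\rho=\rho^{*}$ follows at once. In general I would argue on the support $V$ of $\rho$: since $\sqrt{\rho}$ maps $V$ onto itself invertibly, the quadratic form of the Hermitian operator $\rho-\rho^{*}$ vanishes on $V$, so $P_{V}(\rho-\rho^{*})P_{V}=0$ and $\rho=P_{V}\rho^{*}P_{V}$; a trace comparison using $\mathrm{Tr}\,\rho=\mathrm{Tr}\,\rho^{*}=1$ then shows $\mathrm{supp}(\rho^{*})\subseteq V$, whence $P_{V}\rho^{*}P_{V}=\rho^{*}$ and $\rho=\rho^{*}$. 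I expect this rank-deficient bookkeeping, together with making the equality condition of the strict concavity of $\mathrm{Tr}\sqrt{\cdot}$ precise, to be the main technical obstacle, the rest being direct.
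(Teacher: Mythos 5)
Your proposal is correct and follows essentially the same route as the paper: the upper bound via feasibility of $\mathrm{Re}(\rho)$ in the maximization defining $M_{\mathrm{g}}$, and the lower bound via the identity $F(\rho,\mathrm{Re}(\rho))=\mathrm{Tr}\sqrt{\tfrac12\left(\rho^{2}+\sqrt{\rho}\,\rho^{*}\sqrt{\rho}\right)}$ combined with (operator) concavity of the square root. Your support-space bookkeeping for the equality case when $\rho$ is rank-deficient is in fact slightly more careful than the paper's, which passes from $\sqrt{\rho}\,(\rho-\rho^{*})\,\sqrt{\rho}=0$ to $\rho=\rho^{*}$ without comment.
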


The proof is provided in \hyperlink{app:D}{Appendix D}.
%Theorem \ref{theo:geo measure} presents the relationship between the geometric imaginarity $ M_{\mathrm{g}}$ and the imaginarity measure $   M_{\mathrm{Re}}$.
Similarly, for the Tsallis relative entropy imaginarity $M_{T,\frac{1}{2}}(\rho)=1-A(\rho,\rho^*)$ \cite{J.W.X.2024} with $A(\rho,\sigma)=\mathrm{Tr}(\sqrt{\rho}\sqrt{\sigma})$ denoting the quantum affinity,
employing the relation  $A(\rho,\sigma) \leq F(\rho,\sigma)$ \cite{S.L.Luo2004} and the formula of the geometric imaginarity in Eq. (\ref{eq g im def}), we have $2M_{\mathrm{g}}(\rho) \leq M_{T,\frac{1}{2}}(\rho)$. Hence, we can derive the relation between the imaginarity measure $   M_{\mathrm{Re}}$ and Tsallis relative entropy imaginarity as follows.

\begin{cor}
     For any quantum state $\rho$,  the imaginarity measure $M_{\mathrm{Re}}(\rho)$ and the Tsallis relative entropy imaginarity $ M_{T,\frac{1}{2}}(\rho)$ satisfy
    \begin{eqnarray*}
        M_{\mathrm{Re}}(\rho) \leq \frac{1}{2}M_{T,\frac{1}{2}}(\rho),
    \end{eqnarray*}
    and the equality holds if and only if $\rho$ is a real state.
\end{cor}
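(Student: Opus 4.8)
The plan is to obtain the bound by transitivity, chaining the two estimates that are already available at this point in the paper. The first ingredient is the left-hand inequality of Theorem~\ref{theo:geo measure}, namely $M_{\mathrm{Re}}(\rho)\leq M_{\mathrm{g}}(\rho)$, valid for every quantum state. The second ingredient is the relation $2M_{\mathrm{g}}(\rho)\leq M_{T,\frac{1}{2}}(\rho)$ derived in the paragraph immediately preceding the corollary; this in turn rests on the affinity--fidelity inequality $A(\rho,\rho^*)\leq F(\rho,\rho^*)$ together with the formulas $M_{\mathrm{g}}(\rho)=\frac{1}{2}\bigl(1-F(\rho,\rho^*)\bigr)$ and $M_{T,\frac{1}{2}}(\rho)=1-A(\rho,\rho^*)$. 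First I would record these two facts explicitly and then compose them to read off
\[
M_{\mathrm{Re}}(\rho)\leq M_{\mathrm{g}}(\rho)\leq \frac{1}{2}M_{T,\frac{1}{2}}(\rho),
\]
from which the asserted inequality follows at once.

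For the equality analysis I would argue in two directions. For the \emph{if} direction, suppose $\rho$ is real; then $\rho=\rho^*=\mathrm{Re}(\rho)$, so $F(\rho,\rho^*)=A(\rho,\rho^*)=1$, whence $M_{\mathrm{Re}}(\rho)=0$ and $M_{T,\frac{1}{2}}(\rho)=0$, and equality holds trivially. For the \emph{only if} direction, I would exploit that the asserted equality forces the whole chain to collapse: combining $M_{\mathrm{Re}}(\rho)=\frac{1}{2}M_{T,\frac{1}{2}}(\rho)$ with $M_{\mathrm{Re}}(\rho)\leq M_{\mathrm{g}}(\rho)\leq \frac{1}{2}M_{T,\frac{1}{2}}(\rho)$ yields in particular $M_{\mathrm{Re}}(\rho)=M_{\mathrm{g}}(\rho)$. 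By the equality characterization already established in Theorem~\ref{theo:geo measure}, this holds if and only if $\rho$ is a real state, which closes the argument.

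I do not expect a genuine obstacle here, since the statement is a corollary obtained purely by transitivity from results proved earlier. The only point requiring a little care is the equality characterization: one should not try to route the \emph{only if} direction through the second link $M_{\mathrm{g}}(\rho)=\frac{1}{2}M_{T,\frac{1}{2}}(\rho)$, whose equality condition $A(\rho,\rho^*)=F(\rho,\rho^*)$ would demand a separate discussion, but instead through the first link $M_{\mathrm{Re}}(\rho)=M_{\mathrm{g}}(\rho)$, whose equality condition is already in hand and directly characterizes real states.
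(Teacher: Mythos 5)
Your proposal is correct and matches the paper's own (implicit) argument: the corollary is obtained exactly by chaining $M_{\mathrm{Re}}(\rho)\leq M_{\mathrm{g}}(\rho)$ from Theorem~\ref{theo:geo measure} with the relation $2M_{\mathrm{g}}(\rho)\leq M_{T,\frac{1}{2}}(\rho)$ established in the preceding paragraph. Your handling of the equality case, routing the \emph{only if} direction through $M_{\mathrm{Re}}(\rho)=M_{\mathrm{g}}(\rho)$ and invoking the equality characterization of Theorem~\ref{theo:geo measure}, is also the natural (and correct) way to close the argument.
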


Now we define
\begin{equation}\label{eq mg-2}
M_{\mathrm{g}}^\prime (\rho)=1-\max_{\sigma\in {\cal R}} F(\rho,\sigma)
\end{equation}
by removing the square from the fidelity $F$ in the definition of geometric imaginarity. One can check that $M_{\mathrm{g}}^\prime (\rho)$ satisfies conditions (M1) and (M2). As a quantifier of the imaginarity,
employing the formula in Eq. (\ref{eq g im def}), the analytical expression of $M_{\mathrm{g}}^\prime (\rho)$ is $ M_{\mathrm{g}}^\prime (\rho)= 1- \left[\frac{1+F(\rho,\rho^*)}{2}\right]^{\frac{1}{2}}$.
%\begin{equation}
 %   M_{\mathrm{g}}^\prime (\rho)= 1- \left[\frac{1+F(\rho,\rho^*)}{2}\right]^{\frac{1}{2}}
%\end{equation}
It is related to the imaginarity measure $M_{\mathrm{Re}}$, and the relation is given by
\begin{equation}
     M_{\mathrm{g}}^\prime (\rho) \leq  M_{\mathrm{Re}}(\rho).
\end{equation}
Now we take an example to compare these imaginarity measures illustratively.

\begin{example}   \label{ex Mg MRe MT}
\rm{Let $|+{\rm i}\rangle=\frac{1}{\sqrt{2}}(|0\rangle + {\rm i} |1\rangle )$, consider the mixed state
   % \rm{Consider the mixed state in Eq. \eqref{eq ex rho-2}.
   \begin{equation}\label{eq ex rho-2-o}
        \rho=p |+{\rm i}\rangle\langle +{\rm i}| +(1-p)\frac{I}{2}
 \end{equation}
   with $0\leq p\leq 1$.
   It can be easily calculated that the geometric imaginarity and the normalized Tsallis relative entropy imaginarity $ M^\prime_{T,\frac{1}{2}}(\rho)=\frac{1}{2}M_{T,\frac{1}{2}}(\rho)$ of $\rho$  coincide and equal to
   \begin{equation}  \label{eq mg rho}
        M^\prime_{T,\frac{1}{2}}(\rho)=  M_{\mathrm{g}}(\rho)=\frac{1-\sqrt{1-p^2}}{2},
   \end{equation}
   the imaginarity measure $M_{\mathrm{Re}}$ and the quantifier of the imaginarity $M_{\mathrm{g}}'$ of $\rho$ coincide and equal to
\begin{equation}   \label{eq mre rho}
     M_{\mathrm{Re}}(\rho)=M_{\mathrm{g}}'(\rho)=1-\sqrt{\frac{\sqrt{1-p^2}+1}{2}}.
\end{equation}
   The imaginarity measure $M_{\mathrm{Re}}(M_{\mathrm{g}}')$ and $M_{\mathrm{g}} ( M^\prime_{T,\frac{1}{2}})$ increase with the parameter $p$ (See FIG. \ref{fig:The graphical representations of the functions Mg and MRe}).
When $p=0$, the quantum state in Eq. \eqref{eq ex rho-2-o} becomes maximally mixed state $I/2$ which is real state. In this case, the imaginarity measure $M_{\mathrm{Re}}(M_{\mathrm{g}}')$ and $M_{\mathrm{g}} ( M^\prime_{T,\frac{1}{2}})$ vanish. When $p=1$, the quantum state in Eq. \eqref{eq ex rho-2-o} becomes maximally imaginary state $|\rm +i\rangle$. In this case, the imaginarity measure $M_{\mathrm{Re}}(M_{\mathrm{g}}')$ and $M_{\mathrm{g}} ( M^\prime_{T,\frac{1}{2}})$ reach their maximum values.
Furthermore, one can check that  the geometric imaginarity $M_{\mathrm{g}}$ reaches its upper bound $1-[1- M_{\mathrm{Re}}(\rho)]^2$ for $0\leq p \leq 1$.
}
\end{example}

\begin{figure}[htb]
\centering
    \includegraphics[width=0.4\textwidth]{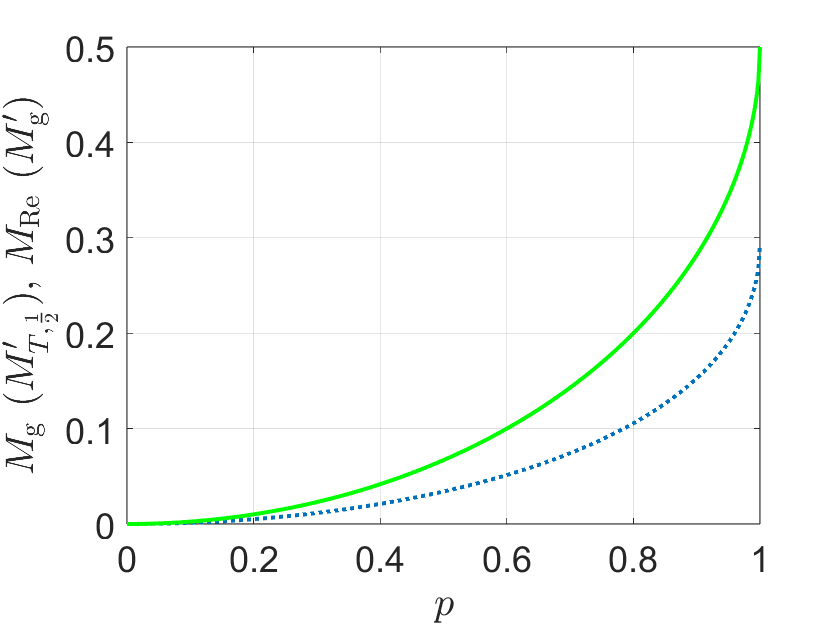}
    \caption{(Color Online) The comparison of the imaginarity measures $M_{\mathrm{g}}\ ( M^\prime_{T,\frac{1}{2}})$ and $M_{\mathrm{Re}}\ (M_{\mathrm{g}}')$ for qubit states in Eq. (\ref{eq ex rho-2-o}), with $M_{\mathrm{g}}\ (M^\prime_{T,\frac{1}{2}})$ shown as the solid green curve and $M_{\mathrm{Re}}\ (M_{\mathrm{g}}')$ as the dashed blue curve. }
    \label{fig:The graphical representations of the functions Mg and MRe}

\end{figure}

As we have mentioned earlier, the geometric imaginarity as well as the trace norm imaginarity and relative entropy imaginarity are all defined in form of optimization initially. The optimalizations for the trace norm imaginarity and relative entropy imaginarity are attained by the real part state $\mathrm{Re}(\rho)$. However the optimal real state for the geometric imaginarity is still unknown. Next we explore the optimal real state for the geometric imaginarity in qubit systems.

 \begin{thm}\label{th mg real}
      For any qubit state %in Eq. (\ref{eq rho bloch}),
         \begin{equation*}
    \rho=\frac{1}{2} (I+\boldsymbol{r}\cdot\boldsymbol{\sigma})=
      \frac{1}{2}
    \left( \begin{array}{cc}
     1+ r_z& r_x-{\rm i} r_y\\
    r_x+{\rm i} r_y & 1-r_z
   \end{array}
  \right)
\end{equation*}
with $\boldsymbol{r}=(r_x, r_y, r_z)\in \mathbb{R}^3$, $|\boldsymbol{r}|=\sqrt{r_x^2+ r_y^2+ r_z^2}\leq 1$,
      \begin{enumerate}[(1)]
          \item the geometric imaginarity is
          \begin{equation}\label{eq mg 2}
              M_{\mathrm{g}}(\rho)=\frac{1}{2}(1-\sqrt{1-r_y^2});
          \end{equation}
          \item its optimal real state is
    \begin{equation}\label{eq mg rel}
       \sigma_\star = \frac{1}{2}(I + x_0  \sigma_x + z_0  \sigma_z),
    \end{equation}
        where $x_0=\frac{r_x}{\sqrt{1-r_y^2}}, z_0=\frac{r_z}{\sqrt{1-r_y^2}}$.
      \end{enumerate}
 \end{thm}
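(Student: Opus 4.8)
The plan is to carry out both parts in the Bloch representation, where the fidelity of two qubit states has a closed form. Writing $\rho=\frac12(I+\boldsymbol r\cdot\boldsymbol\sigma)$, one has $\rho^*=\frac12(I+\boldsymbol r'\cdot\boldsymbol\sigma)$ with $\boldsymbol r'=(r_x,-r_y,r_z)$, since $\sigma_x,\sigma_z$ are real while $\sigma_y^*=-\sigma_y$. I would invoke the qubit fidelity formula
\begin{equation*}
F^2(\rho_a,\rho_b)=\tfrac12\Big(1+\boldsymbol a\cdot\boldsymbol b+\sqrt{(1-|\boldsymbol a|^2)(1-|\boldsymbol b|^2)}\Big),
\end{equation*}
which follows from $F^2(\rho_a,\rho_b)=\mathrm{Tr}(\rho_a\rho_b)+2\sqrt{\det\rho_a\det\rho_b}$ together with $\mathrm{Tr}(\rho_a\rho_b)=\frac12(1+\boldsymbol a\cdot\boldsymbol b)$ and $\det\rho_a=\frac14(1-|\boldsymbol a|^2)$. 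For part (1) I would substitute $\boldsymbol a=\boldsymbol r$, $\boldsymbol b=\boldsymbol r'$; then $|\boldsymbol r'|=|\boldsymbol r|$ and $\boldsymbol r\cdot\boldsymbol r'=|\boldsymbol r|^2-2r_y^2$, so $F^2(\rho,\rho^*)=\frac12\big(1+|\boldsymbol r|^2-2r_y^2+(1-|\boldsymbol r|^2)\big)=1-r_y^2$. Feeding $F(\rho,\rho^*)=\sqrt{1-r_y^2}$ into the identity $M_{\mathrm g}(\rho)=\frac12(1-F(\rho,\rho^*))$ from Eq. (\ref{eq g im def}) gives Eq. (\ref{eq mg 2}) immediately.

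For part (2) I would instead attack the defining optimization $\max_{\sigma\in\mathcal R}F^2(\rho,\sigma)$ directly, since the identity used above does not expose the optimizer. A real qubit state has Bloch vector $\boldsymbol s=(x,0,z)$ with $x^2+z^2\le1$, so by the fidelity formula maximizing $F^2$ is equivalent to maximizing
\begin{equation*}
f(x,z)=r_x x+r_z z+\sqrt{(1-|\boldsymbol r|^2)(1-x^2-z^2)}
\end{equation*}
over the closed unit disk. Setting $c=1-|\boldsymbol r|^2\ge0$, the map $(x,z)\mapsto\sqrt{c}\,\sqrt{1-x^2-z^2}$ is concave on the disk and the remaining part of $f$ is affine, so $f$ is concave and any interior stationary point is its global maximizer. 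The stationarity equations $\partial_x f=\partial_z f=0$ read $r_x=\sqrt c\,x/\sqrt{1-x^2-z^2}$ and $r_z=\sqrt c\,z/\sqrt{1-x^2-z^2}$; introducing $t=1/\sqrt{1-x^2-z^2}$ and eliminating yields $t^2=(1-r_y^2)/c$, whence $x_0=r_x/\sqrt{1-r_y^2}$ and $z_0=r_z/\sqrt{1-r_y^2}$, which is exactly Eq. (\ref{eq mg rel}). I would then verify $x_0^2+z_0^2=(r_x^2+r_z^2)/(1-r_y^2)\le1$ (using $r_x^2+r_z^2\le1-r_y^2$), so $\sigma_\star$ is a legitimate state, and substitute back to confirm $f(x_0,z_0)=\sqrt{1-r_y^2}$, i.e. $F^2(\rho,\sigma_\star)=\frac12(1+\sqrt{1-r_y^2})=1-M_{\mathrm g}(\rho)$, in agreement with part (1). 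This pins down $\sigma_\star$ as the optimal real state.

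The main obstacle is the degenerate boundary case $|\boldsymbol r|=1$ (pure states), where $c=0$: the stationarity equations become singular and $f$ collapses to the linear functional $r_x x+r_z z$, whose maximum over the disk is attained on the boundary at $(x,z)=(r_x,r_z)/\sqrt{r_x^2+r_z^2}$. I would handle this case separately and observe that, since $\sqrt{1-r_y^2}=\sqrt{r_x^2+r_z^2}$ when $|\boldsymbol r|=1$, this boundary maximizer still equals $\sigma_\star$ and the value formula is unchanged, so Eqs. (\ref{eq mg 2}) and (\ref{eq mg rel}) hold uniformly. A secondary point to record is the rigorous justification, in the generic case $c>0$, that the unique interior critical point dominates all boundary values; this is precisely what strict concavity of $f$ supplies, ruling out any competing maximizer on $x^2+z^2=1$.
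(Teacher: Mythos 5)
Your proof is correct and follows essentially the same route as the paper's Appendix E: Bloch representation, the qubit fidelity formula $F^2(\rho,\sigma)=\mathrm{Tr}(\rho\sigma)+2\sqrt{\det\rho\,\det\sigma}$, and maximization of the resulting function over the disk $x^2+z^2\le 1$, arriving at the same stationary point $(x_0,z_0)$. Your concavity argument certifying that the interior critical point is the global maximum (together with the separate treatment of the degenerate boundary case $|\boldsymbol{r}|=1$) is a slightly cleaner justification than the paper's comparison of the Lagrange-multiplier boundary value with the interior value, and you additionally derive part (1) directly, which the paper only cites from the literature.
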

The analytical formula in Eq. (\ref{eq mg 2}) is obtained in  \cite{T.T.Xia2024}. For the proof of the optimal real state in Eq. (\ref{eq mg rel}), we leave it in the \hyperlink{app:E}{Appendix E}.
Theorem \ref{th mg real} tells us that the optimal real state in accordance with the geometric imaginarity is generally not the real part state. It coincides with the real part state if and only if $r_x=r_z=0$ or $r_y=0$. That is also the reason why the state given in Eq. \eqref{eq ex rho-2-o} enable geometric imaginarity to reach its upper bound in Theorem \ref{theo:geo measure}.

%\subsection{The comparison with the trace norm imaginarity}
Additionally, we find that the imaginarity measure $M_{\mathrm{Re}}$ is bounded by the trace norm imaginarity as follows, where we denote the purity $P(\rho)={\rm Tr}(\rho^2)$.

\begin{thm}  \label{thm:trace norm}
    For any quantum state $\rho$,  the imaginarity measure $M_{\mathrm{Re}}$ and the trace norm imaginarity satisfy
    \begin{eqnarray}   \label{eq tradeoff}
        1-\left[\frac{1}{d}+\frac{1}{d}\sqrt{(d-1)^2-(d-1)M_{\mathrm{tr}}(\rho)^2}\right]^{\frac{1}{2}} \leq M_{\mathrm{Re}}(\rho) \leq 1-\big(P(\rho)-M_{\mathrm{tr}}(\rho)^2\big)^{\frac{1}{2}}.
    \end{eqnarray}
\end{thm}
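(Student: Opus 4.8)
The plan is to reduce both inequalities to bounds on the single fidelity $F(\rho,\mathrm{Re}(\rho))$, since $M_{\mathrm{Re}}(\rho)=1-F(\rho,\mathrm{Re}(\rho))$. I will write $\rho=\mathrm{Re}(\rho)+{\rm i}\,\mathrm{Im}(\rho)$, where $\mathrm{Re}(\rho)$ is real symmetric with unit trace (and positive semidefinite, being the real part state) and $\mathrm{Im}(\rho)$ is real antisymmetric. The proof tracks three scalars: the purity $P(\rho)=\mathrm{Tr}(\rho^2)$, the Frobenius weight $B:=||\mathrm{Im}(\rho)||_{\mathrm{F}}^2$, and $M_{\mathrm{tr}}(\rho)$. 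A short computation gives the identities $\mathrm{Tr}(\rho\,\mathrm{Re}(\rho))=\mathrm{Tr}(\mathrm{Re}(\rho)^2)=P(\rho)-B$, while, denoting by $\pm{\rm i}\mu_k$ the eigenvalue pairs of $\mathrm{Im}(\rho)$, one has $M_{\mathrm{tr}}(\rho)=||\mathrm{Im}(\rho)||_{\mathrm{tr}}=2\sum_k\mu_k$ and $B=2\sum_k\mu_k^2$.

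For the upper bound I will use the elementary fidelity inequality $F(\rho,\sigma)^2\ge\mathrm{Tr}(\rho\sigma)$, which follows by writing $F=\sum_i s_i$ and $\mathrm{Tr}(\rho\sigma)=\sum_i s_i^2$ for the singular values $s_i$ of $\sqrt{\rho}\sqrt{\sigma}$ and invoking $(\sum_i s_i)^2\ge\sum_i s_i^2$. Applied to $\sigma=\mathrm{Re}(\rho)$ this gives $F(\rho,\mathrm{Re}(\rho))^2\ge P(\rho)-B$. Combining with $M_{\mathrm{tr}}(\rho)^2=4(\sum_k\mu_k)^2\ge 4\sum_k\mu_k^2=2B\ge B$ yields $F(\rho,\mathrm{Re}(\rho))^2\ge P(\rho)-B\ge P(\rho)-M_{\mathrm{tr}}(\rho)^2$, and the upper bound follows on taking square roots (in the regime $P(\rho)\ge M_{\mathrm{tr}}(\rho)^2$ where the right-hand side is real).

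For the lower bound I will use the super-fidelity estimate $F(\rho,\sigma)^2\le\mathrm{Tr}(\rho\sigma)+\sqrt{(1-\mathrm{Tr}\rho^2)(1-\mathrm{Tr}\sigma^2)}$, which for $\sigma=\mathrm{Re}(\rho)$ becomes $F(\rho,\mathrm{Re}(\rho))^2\le g(P,B):=(P-B)+\sqrt{(1-P)(1-P+B)}$. Two structural constraints then pin down the $d$-dependence: since $\mathrm{Re}(\rho)$ is real symmetric with unit trace, Cauchy--Schwarz on its eigenvalues gives $P-B=\mathrm{Tr}(\mathrm{Re}(\rho)^2)\ge 1/d$; and since $\mathrm{Im}(\rho)$ has at most $\lfloor d/2\rfloor$ conjugate eigenvalue pairs, $\sum_k\mu_k^2\ge(\sum_k\mu_k)^2/\lfloor d/2\rfloor$ gives $B\ge M_{\mathrm{tr}}(\rho)^2/d$. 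A direct check via AM--GM shows $g$ is non-increasing in both $P$ and $B$ on the relevant domain, so substituting first $P=1/d+B$ and then $B=M_{\mathrm{tr}}(\rho)^2/d$ gives $g(P,B)\le\frac{1}{d}\big[1+\sqrt{(d-1)^2-(d-1)M_{\mathrm{tr}}(\rho)^2}\big]$, which is exactly the claimed bound on $F^2$.

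The main obstacle is the lower bound: reducing the two independent parameters $P$ and $B$ to the single invariant $M_{\mathrm{tr}}(\rho)$ while producing the precise $(d-1)$ factors. This hinges on the inequality $B\ge M_{\mathrm{tr}}(\rho)^2/d$, that is, on comparing the Frobenius and trace norms of the antisymmetric part $\mathrm{Im}(\rho)$ through the dimensional bound $\lfloor d/2\rfloor\le d/2$ on its rank, together with verifying the monotonicity of $g$ so that the extremal substitution is legitimate. The easy direction, by contrast, requires only $F^2\ge\mathrm{Tr}(\rho\sigma)$ and $M_{\mathrm{tr}}(\rho)^2\ge B$.
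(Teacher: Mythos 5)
Your proposal is correct, and both directions rest on the same two pillars as the paper: a lower bound on $F(\rho,\mathrm{Re}(\rho))$ of the form $F^2\ge\mathrm{Tr}(\rho\,\mathrm{Re}(\rho))=P(\rho)-B$ for the upper bound on $M_{\mathrm{Re}}$, and the super-fidelity bound $F^2\le U(\rho,\mathrm{Re}(\rho))$ of Miszczak et al.\ for the lower bound. The differences are in the intermediate machinery. For the easy direction the paper invokes the sub-fidelity $L(\rho,\sigma)$ and drops its second term, whereas you derive $F^2\ge\mathrm{Tr}(\rho\sigma)$ directly from the singular values of $\sqrt{\rho}\sqrt{\sigma}$; these are equivalent, and your norm comparison $M_{\mathrm{tr}}^2=4(\sum_k\mu_k)^2\ge 2B\ge B$ is the paper's $\|A\|_{\mathrm{H\text{-}S}}^2\le\|A\|_{\mathrm{tr}}^2$ in eigenvalue form. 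The more substantive divergence is in the lower bound: the paper proves a separate lemma maximizing $U(\rho,\sigma)$ over \emph{all} real states $\sigma$ by a Lagrange-multiplier computation in the $\sim d^2$ entries of $\sigma$, then specializes; you instead stay with $\sigma=\mathrm{Re}(\rho)$, reduce $U$ to the two scalars $P$ and $B$, verify via AM--GM that $g(P,B)=(P-B)+\sqrt{(1-P)(1-P+B)}$ is non-increasing in both arguments, and substitute the extremal values $P-B=\mathrm{Tr}(\mathrm{Re}(\rho)^2)\ge 1/d$ and $B\ge M_{\mathrm{tr}}^2/(2\lfloor d/2\rfloor)\ge M_{\mathrm{tr}}^2/d$. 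Your intermediate expression $\frac{1}{d}+\frac{1}{d}\sqrt{(d-1)^2-d(d-1)B}$ coincides exactly with the paper's Lemma~\ref{le upper bound of Mg} evaluated at $\mathrm{Tr}(\rho^2)-\mathrm{Tr}(\rho\rho^*)=2B$, so the two routes land on the identical bound; yours is more self-contained and easier to audit (the monotonicity claims are one-line AM--GM checks rather than an unexhibited multivariate Lagrange computation), at the cost of not yielding the paper's stronger statement about $\max_{\sigma\in\mathcal{R}}U(\rho,\sigma)$. Your parenthetical restriction to the regime $P(\rho)\ge M_{\mathrm{tr}}(\rho)^2$ in the upper bound is an honest caveat the paper leaves implicit and does not constitute a gap.
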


We leave the proof in \hyperlink{app:F}{Appendix F}.
The second inequality in Theorem \ref{thm:trace norm}  indicates a relation among the imaginarity measure $ M_{\mathrm{Re}}$, the trace norm imaginarity $ M_{\mathrm{tr}}$ and the purity
\begin{eqnarray*}
       [1- M_{\mathrm{Re}}(\rho) ]^2 + M_{\mathrm{tr}}(\rho)^2 \geq P(\rho).
    \end{eqnarray*}
This relation shows for any given quantum state $\rho$, the imaginarity measure $ M_{\mathrm{Re}}$ increases with the trace norm imaginarity $ M_{\mathrm{tr}}$. However the purity limits the sum of $ [1- M_{\mathrm{Re}}(\rho) ]^2$ and $M_{\mathrm{tr}}(\rho)^2$.
Next, we illustrate the evaluation of the imaginarity measure $M_{\mathrm{Re}}$ with a specific example, using the trace norm imaginarity $M_{\mathrm{tr}}$.

\begin{example}
    \rm {Turning again to the mixed state given by Eq. \eqref{eq ex rho-2-o} in Example \ref{ex Mg MRe MT}.
    It can be easily calculated that the trace norm imaginarity is $ M_{\mathrm{tr}}(\rho)=p$ and the purity of $\rho$ is $P(\rho)=\frac{1+p^2}{2}$.
%\begin{equation}  \label{eq mtr rho}
 %    M_{\mathrm{tr}}(\rho)=p,
%\end{equation}
%\begin{eqnarray*}
 %   P(\rho)=\frac{1+p^2}{2}.
%\end{eqnarray*}
In consideration of the imaginarity measure $M_{\mathrm{Re}}(\rho)$ in Eq. (\ref{eq mre rho}), we find the imaginarity measure $M_{\mathrm{Re}}$, trace norm imaginarity $M_{\mathrm{tr}}$ and purity $P$ increase with the parameter $p$.
When $p=0$, the quantum state in Eq. (\ref{eq ex rho-2-o}) becomes maximally mixed state $I/2$. In this case, both $M_{\mathrm{Re}}(\rho)$ and  $M_{\mathrm{tr}}(\rho)$ vanish. Meanwhile the purity $ P(\rho)$ reaches its minimum value 1/2.
When $p=1$, the quantum state in Eq. (\ref{eq ex rho-2-o}) becomes the maximally imaginary state $|\rm +i\rangle$. In this case, both  $M_{\mathrm{Re}}(\rho)$ and $M_{\mathrm{tr}}(\rho)$ reach their maximum values. The visualization of the relation in Eq. (\ref{eq tradeoff}) is presented in FIG. \ref{fig:The comparison of the imaginarity measures MRe ,Mtr and the purity for qubit}, from which we see the imaginarity measure $M_{\mathrm{Re}}(\rho)$ coincides with its lower bound. So the lower bound in Eq. (\ref{eq tradeoff}) is tight.
}
\end{example}

\begin{figure}[ht]
    \centering
    \includegraphics[width=0.4\textwidth]{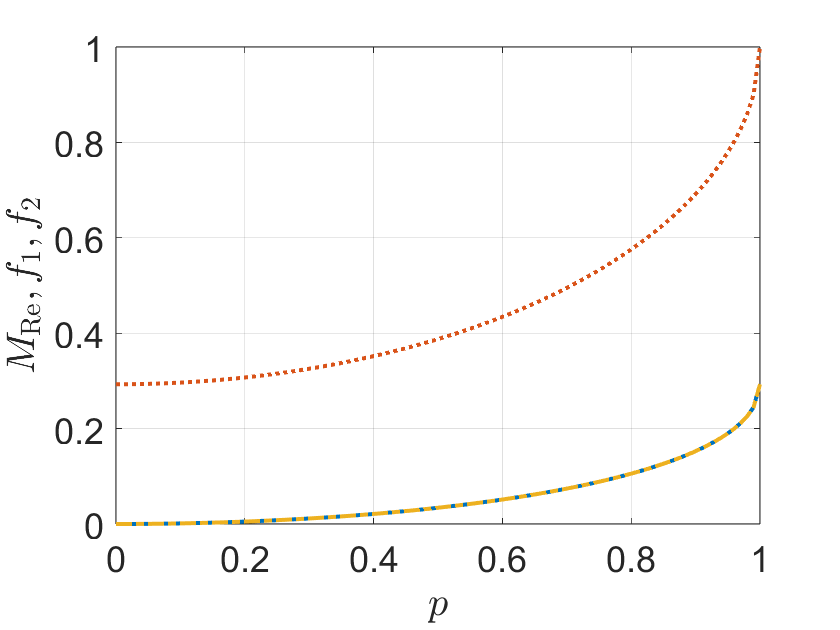}
    \caption{(Color Online) The visualization of the bounds for $M_{\mathrm{Re}}$ in Theorem \ref{thm:trace norm} for the qubit state in Eq. \eqref{eq ex rho-2-o}. The red dashed curve above represents $f_1=1-\big(P(\rho)-M_{\mathrm{tr}}(\rho)^2\big)^{\frac{1}{2}}$ corresponding to the upper bound in Eq. \eqref{eq tradeoff}.
    The yellow dashed curve below represents
    $f_2=1-\left[\frac{1}{d}+\frac{1}{d}\sqrt{(d-1)^2-(d-1)M_{\mathrm{tr}}(\rho)^2}\right]^{\frac{1}{2}}$ corresponding to the lower bound in Eq. \eqref{eq tradeoff}.
   The blue solid curve represents $M_{\mathrm{Re}}$ and it coincides with the lower bound  $f_2$. }
    \label{fig:The comparison of the imaginarity measures MRe ,Mtr and the purity for qubit}
\end{figure}

\section{The complementarity relations of the imaginary measure ${M_{\mathrm{Re}}}$ }

In this section we show the complementarity relation of imaginary measure $M_{\mathrm{Re}}$ under a complete set of MUBs. First in any qubit systems, suppose $B_1$, $B_2$, and $B_3$ are three orthonormal bases composed by the eigenvectors of three Pauli operators,
\begin{eqnarray*}
    \begin{array}{rcl}
       B_1&=& \{|0\rangle,\ |1\rangle\},\\
       B_2&=& \{\frac{1}{\sqrt{2}}(|0\rangle + {\rm i}|1\rangle),\ \frac{1}{\sqrt{2}}(|0\rangle- {\rm i}|1\rangle)\},\\
       B_3&=&\{\frac{1}{\sqrt{2}}(|0\rangle + |1\rangle),\ -\frac{\rm i}{\sqrt{2}}(|0\rangle- |1\rangle)\}.
    \end{array}
    \end{eqnarray*}
    These three orthonormal bases $B_1$, $B_2$ and $B_3$ are mutually unbiased. In \hyperlink{app:G}{Appendix G}, we shall prove the complementarity relation of imaginary measure $M_{\mathrm{Re}}$ under the set of $\{B_i\}_{i=1,2,3}$.

\begin{thm}  \label{th comple of qubit}
   For any qubit state $\rho$, the imaginarity measure $M_{\mathrm{Re}}$ under mutually unbiased bases $B_1$, $B_2$ and $B_3$ satisfies
    \begin{eqnarray}   \label{eq trade-off for qubit}
        \left(1-M_{\mathrm{Re}}^{B_1}\right)^2+\left(1-M_{\mathrm{Re}}^{B_2}\right)^2+\left(1-M_{\mathrm{Re}}^{B_3}\right)^2 \geq \frac{\sqrt{(1-|\boldsymbol{r}|^2)(3-2|\boldsymbol{r}|^2)}+3+2|\boldsymbol{r}|^2}{2},
    \end{eqnarray}
    with $|\boldsymbol{r}|$ the length of the  Bloch vector of $\rho$. And the equality holds if and only if $\rho$ is pure.
\end{thm}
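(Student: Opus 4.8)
The plan is to reduce the complementarity relation to a single elementary inequality about sums of square roots, once I have understood how the three mutually unbiased bases interact with the qubit formula of Theorem \ref{th:4}. The starting observation is that, for a qubit, Eq. (\ref{eq mre 2}) shows $M_{\mathrm{Re}}$ depends only on $|\boldsymbol{r}|^2$ and on the single ``imaginary'' coordinate, namely the $\sigma_y$-component of $\rho$ in the reference basis. Since $M_{\mathrm{Re}}^{B_i}$ is the measure computed with $B_i$ as reference basis, passing from the standard basis to $B_i$ amounts to conjugating $\rho$ by the unitary $U_i$ that maps $B_i$ onto $\{|0\rangle,|1\rangle\}$; this rotates $\boldsymbol{r}$ while preserving $|\boldsymbol{r}|$, so I only need to track which Bloch coordinate becomes the imaginary one. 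A direct computation of $U_i^\dagger \sigma_k U_i$ (with due care for the phase conventions in the vectors of $B_3$) shows that $B_1,B_2,B_3$, the eigenbases of $\sigma_z,\sigma_y,\sigma_x$, cyclically permute the Pauli operators, so that the imaginary coordinate relative to $B_1,B_2,B_3$ equals $r_y,r_x,r_z$ respectively. This is the structural heart of the argument: the three MUBs distribute the three Bloch components among the three imaginarity evaluations.

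Writing $c_1=r_y$, $c_2=r_x$, $c_3=r_z$, Theorem \ref{th:4} then gives
\[
\left(1-M_{\mathrm{Re}}^{B_i}\right)^2=\tfrac{1}{2}\big[\sqrt{(1-|\boldsymbol{r}|^2)(1-|\boldsymbol{r}|^2+c_i^2)}+1+|\boldsymbol{r}|^2-c_i^2\big].
\]
Summing over $i$ and using the key cancellation $c_1^2+c_2^2+c_3^2=|\boldsymbol{r}|^2$, the polynomial part collapses to $3+2|\boldsymbol{r}|^2$, and the left-hand side of Eq. (\ref{eq trade-off for qubit}) becomes
\[
\tfrac{1}{2}\Big[\sqrt{1-|\boldsymbol{r}|^2}\,\textstyle\sum_{i=1}^{3}\sqrt{1-|\boldsymbol{r}|^2+c_i^2}+3+2|\boldsymbol{r}|^2\Big].
\]
Hence the claimed bound is equivalent to the reduced inequality $\sqrt{1-|\boldsymbol{r}|^2}\,\sum_{i}\sqrt{1-|\boldsymbol{r}|^2+c_i^2}\geq\sqrt{(1-|\boldsymbol{r}|^2)(3-2|\boldsymbol{r}|^2)}$.

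The final step is to prove this reduced inequality together with its equality case. For $|\boldsymbol{r}|=1$ both sides vanish. For $|\boldsymbol{r}|<1$ I would cancel the strictly positive factor $\sqrt{1-|\boldsymbol{r}|^2}$ and set $x_i=1-|\boldsymbol{r}|^2+c_i^2>0$; since $\sum_i x_i=3(1-|\boldsymbol{r}|^2)+|\boldsymbol{r}|^2=3-2|\boldsymbol{r}|^2$, the inequality is exactly the subadditivity of the square root, $\sum_i\sqrt{x_i}\geq\sqrt{\sum_i x_i}$. Equality there would force at most one $x_i$ to be nonzero, which is impossible when $|\boldsymbol{r}|<1$ because then every $x_i\geq 1-|\boldsymbol{r}|^2>0$. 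Thus the complementarity inequality is strict for mixed states and reduces to an equality precisely when $|\boldsymbol{r}|=1$, i.e. when $\rho$ is pure, which gives the stated equality condition.

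I expect the main obstacle to be the first step, namely correctly attaching an imaginary coordinate to each MUB. One must respect that $M_{\mathrm{Re}}^{B_i}$ is basis-relative and that the $U_i$ (including the $-\mathrm{i}$ phase in the second vector of $B_3$) induce the right rotation of the Bloch vector; an error in the resulting permutation $r_y,r_x,r_z$ would destroy the crucial identity $\sum_i c_i^2=|\boldsymbol{r}|^2$ on which the whole simplification rests. Once that identification is secured, the remainder is the clean subadditivity estimate, and the equality analysis follows immediately.
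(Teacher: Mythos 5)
Your proposal is correct and follows essentially the same route as the paper: both identify the imaginary Bloch coordinate relative to $B_1,B_2,B_3$ as $r_y,r_x,r_z$, apply the qubit formula of Theorem \ref{th:4} in each basis, collapse the polynomial part via $r_x^2+r_y^2+r_z^2=|\boldsymbol{r}|^2$, and conclude with the subadditivity $\sqrt{x}+\sqrt{y}\geq\sqrt{x+y}$. Your equality analysis (strictness for $|\boldsymbol{r}|<1$ since every $x_i\geq 1-|\boldsymbol{r}|^2>0$, and both sides vanishing at $|\boldsymbol{r}|=1$) is in fact slightly more explicit than the paper's one-line statement of the same fact.
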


Theorem \ref{th comple of qubit} demonstrates that the imaginarity of quantum state under a complete set of MUBs is confined by the length of the  Bloch vector of $\rho$ (See FIG. \ref{fig:The complement for 2}).
The relation in Eq. (\ref{eq trade-off for qubit}) turns into the equality for qubit pure states. For any high-dimensional pure state, there exists a real orthogonal matrix $O$ that transforms it to a pure qubit state \cite{A.H.2018} \cite{WuKD20212}. Since real orthogonal transformations preserve the imaginarity measure of quantum states, we have the following complete complementarity relation.

\begin{figure}[ht]
    \centering
    \includegraphics[width=0.4\textwidth]{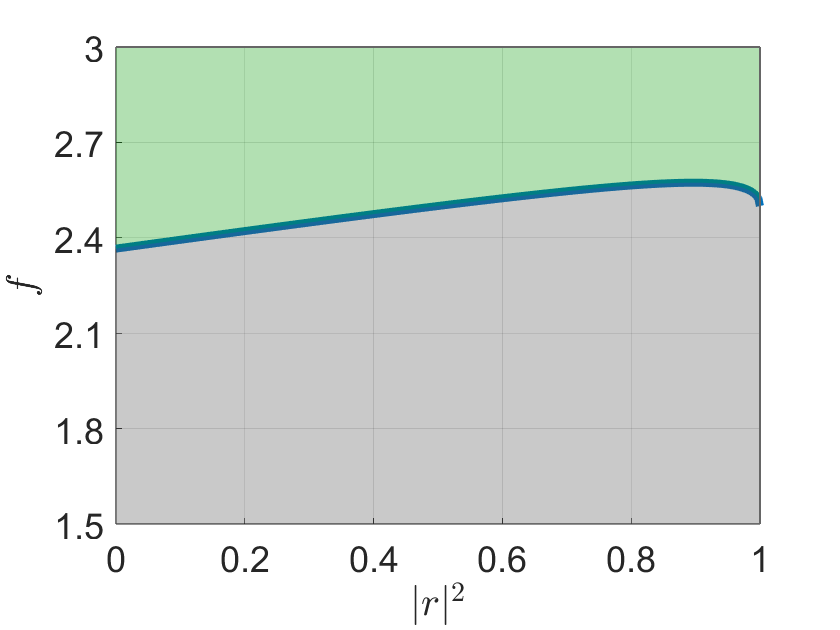}
    \caption{(Color Online) The green part above is the region available for the imaginarity measure $M_{\mathrm{Re}}$. Here the vertical axis $f= \left(1-M_{\mathrm{Re}}^{B_1}\right)^2+\left(1-M_{\mathrm{Re}}^{B_2}\right)^2+\left(1-M_{\mathrm{Re}}^{B_3}\right)^2$ corresponds to the left-hand side of Eq. (\ref{eq trade-off for qubit}), while the blue curve plots the function $\frac{\sqrt{(1-|\boldsymbol{r}|^2)(3-2|\boldsymbol{r}|^2)}+3+2|\boldsymbol{r}|^2}{2}$ with respect to $|r|^2$, corresponding to the right-hand side. }
    \label{fig:The complement for 2}
\end{figure}

\begin{cor}   \label{coll qutrit pure compl}
    For any $d$-dimensional pure state, there exists three sets of orthonormal bases $Y_i$ ($i=1,2,3$) such that
    the imaginarity measure $M_{\mathrm{Re}}$ satisfies
    \begin{eqnarray}\label{eq com rel pure}
        \left(1-M_{\mathrm{Re}}^{Y_1}\right)^2+\left(1-M_{\mathrm{Re}}^{Y_2}\right)^2+\left(1-M_{\mathrm{Re}}^{Y_3}\right)^2 =\frac{5}{2}.
    \end{eqnarray}
   \end{cor}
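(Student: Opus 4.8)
The plan is to reduce the $d$-dimensional problem to the qubit case already settled in Theorem \ref{th comple of qubit}, exploiting the invariance of $M_{\mathrm{Re}}$ under real orthogonal transformations. The starting observation is that for any \emph{pure} qubit state the Bloch vector satisfies $|\boldsymbol{r}|=1$, so the right-hand side of Eq. (\ref{eq trade-off for qubit}) collapses to $\frac{\sqrt{0}+3+2}{2}=\frac{5}{2}$. Since Theorem \ref{th comple of qubit} guarantees equality precisely for pure states, \emph{every} pure qubit state saturates the relation at the universal value $\frac{5}{2}$, regardless of which qubit state it is. The remaining task is therefore to manufacture, for an arbitrary $d$-dimensional pure state $|\psi\rangle$, three orthonormal bases $Y_1,Y_2,Y_3$ that faithfully reproduce this qubit computation.

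First I would invoke the structural result \cite{A.H.2018,WuKD20212} that for any pure state $|\psi\rangle$ there is a real orthogonal matrix $O$ with $O|\psi\rangle=|\phi\rangle$ supported on the two-dimensional subspace $\mathrm{span}\{|0\rangle,|1\rangle\}$, so that $|\phi\rangle$ is effectively a pure qubit state. Next I would extend each of the qubit MUBs $B_i$ to a full $d$-dimensional orthonormal basis $\tilde B_i=B_i\cup\{|2\rangle,\dots,|d-1\rangle\}$ by appending the real standard basis vectors; because $|\phi\rangle$ has no amplitude outside the qubit subspace, its matrix in $\tilde B_i$ is block-diagonal with a trivial real block, whence $M_{\mathrm{Re}}^{\tilde B_i}(|\phi\rangle)=M_{\mathrm{Re}}^{B_i}(|\phi\rangle)$.

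The central step is the transfer back to $|\psi\rangle$. I would define $Y_i:=O^{\mathrm T}\tilde B_i=\{O^{\mathrm T}|v\rangle:|v\rangle\in\tilde B_i\}$, which is again orthonormal since $O$ is orthogonal. Because $O$ is real, for any vector $|v\rangle$ one has $\langle v|O|\psi\rangle=\langle O^{\mathrm T}v|\psi\rangle$, so the matrix of $|\psi\rangle$ in the basis $Y_i$ coincides entrywise with the matrix of $|\phi\rangle$ in the basis $\tilde B_i$. As $M_{\mathrm{Re}}$ depends only on this array of basis-resolved entries (through the entrywise real part and the basis-independent fidelity), it follows that $M_{\mathrm{Re}}^{Y_i}(|\psi\rangle)=M_{\mathrm{Re}}^{\tilde B_i}(|\phi\rangle)=M_{\mathrm{Re}}^{B_i}(|\phi\rangle)$ for each $i$. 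Substituting into the saturated qubit relation yields $\sum_{i=1}^{3}\bigl(1-M_{\mathrm{Re}}^{Y_i}(|\psi\rangle)\bigr)^2=\frac{5}{2}$, as claimed.

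I expect the main obstacle to be making the invariance claim $M_{\mathrm{Re}}^{Y_i}(|\psi\rangle)=M_{\mathrm{Re}}^{B_i}(|\phi\rangle)$ fully rigorous: one must verify that conjugating by the real operation $\rho\mapsto O\rho O^{\mathrm T}$ together with the reciprocal basis change is genuinely imaginarity-\emph{preserving} rather than merely monotone. Applying monotonicity (M2) in both directions — once for $O$ and once for $O^{\mathrm T}=O^{-1}$, both of which are real operations — pins the measure to be exactly preserved, which closes this gap and completes the argument.
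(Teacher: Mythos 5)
Your proposal is correct and follows essentially the same route as the paper: reduce to a pure qubit state via a real orthogonal transformation, use the fact that real orthogonal conjugation (equivalently, a real change of basis) preserves $M_{\mathrm{Re}}$, and then invoke the equality case $|\boldsymbol{r}|=1$ of Theorem \ref{th comple of qubit}, which gives the value $\tfrac{5}{2}$. The paper only sketches this; your additional details — extending the qubit MUBs by real standard basis vectors, identifying the basis-resolved matrix entries, and pinning down exact invariance by applying (M2) to both $O$ and $O^{\mathrm T}$ — are the right way to make the sketch rigorous.
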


   It should be noted that these orthonormal bases $\{Y_i\}$ are state-dependent as the  real orthogonal matrix $O$ transforming high dimensional pure state to qubit pure state is state-dependent. Here we show an example for this complemantarity relation in Eq. (\ref{eq com rel pure}) in three dimensional systems.

\begin{example}
    \rm{Consider the qutrit pure state  $|\phi\rangle=\frac{1}{\sqrt{3}} (|0\rangle+  |1\rangle + {\rm i} |2\rangle )$,
 We specify
    \begin{eqnarray}
    \begin{array}{rcl}
       Y_1&=& \{|0\rangle,\ |1\rangle,\ |2\rangle\},\\
       Y_2&=& \{\frac{\sqrt{2}+2}{4}|0\rangle + \frac{\sqrt{2}-2}{4}|1\rangle + \frac{\mathrm{i}}{2}|2\rangle,\ \frac{\sqrt{2}-2}{4}|0\rangle + \frac{\sqrt{2}+2}{4}|1\rangle + \frac{\mathrm{i}}{2}|2\rangle,\ \frac{1}{2}|0\rangle + \frac{1}{2}|1\rangle - \frac{\mathrm{i}}{\sqrt{2}}|2\rangle\},\\
       Y_3&=&\{\frac{\sqrt{2}+2}{4}|0\rangle + \frac{\sqrt{2}-2}{4}|1\rangle + \frac{\mathrm{1}}{2}|2\rangle,\ \frac{\sqrt{2}-2}{4}|0\rangle + \frac{\sqrt{2}+2}{4}|1\rangle + \frac{\mathrm{1}}{2}|2\rangle,\ -\frac{\mathrm{i}}{2}|0\rangle - \frac{\mathrm{i}}{2}|1\rangle + \frac{\mathrm{i}}{\sqrt{2}}|2\rangle\}.
    \end{array}
    \end{eqnarray}
 Under these three orthonormal bases  $Y_1$, $Y_2$ and $Y_3$, the imaginaries of $  |\phi\rangle$ are $M_{\mathrm{Re}}^{Y_1}(|\phi\rangle)=1-\frac{\sqrt{5}}{3}$, $M_{\mathrm{Re}}^{Y_2}(|\phi\rangle)=0$ and $M_{\mathrm{Re}}^{Y_3}(|\phi\rangle)=1-\frac{\sqrt{34}}{6}$ respectively. This gives rise to the complementarity relation in Eq. (\ref{eq com rel pure}).
 }
\end{example}

In qutrit systems, a complete set of MUBs can be constructed as follows:
\begin{eqnarray}
    \begin{array}{rcl}
       Z_1&=& \{|0\rangle,\ |1\rangle,\ |2\rangle\},\\
       Z_2&=& \{\frac{1}{\sqrt{3}}(|0\rangle + |1\rangle + |2\rangle),\ \frac{1}{\sqrt{3}}(|0\rangle+ \omega |1\rangle + \omega^2 |2\rangle),\ \frac{1}{\sqrt{3}}(|0\rangle+ \omega^2 |1\rangle + \omega |2\rangle)\},\\
       Z_3&=&\{\frac{1}{\sqrt{3}}(|0\rangle + \omega |1\rangle + \omega |2\rangle),\ \frac{1}{\sqrt{3}}(|0\rangle+ \omega^2 |1\rangle +  |2\rangle),\ \frac{1}{\sqrt{3}}(|0\rangle+ |1\rangle + \omega^2 |2\rangle)\},\\
       Z_4&=&\{\frac{1}{\sqrt{3}}(|0\rangle + \omega^2 |1\rangle + \omega^2 |2\rangle),\ \frac{1}{\sqrt{3}}(|0\rangle+  |1\rangle + \omega |2\rangle),\ \frac{1}{\sqrt{3}}(|0\rangle+ \omega|1\rangle + |2\rangle)\},
    \end{array}
    \end{eqnarray}
    where $\omega=e^{\frac{2 \pi \mathrm{i}}{3}}$ is the complex $3$rd root of unity. In \hyperlink{app:H}{Appendix H}, we shall prove the complementarity relation of $M_{\mathrm{Re}}$ under the complete set of MUBs $\{Z_k\}_{k=1,2,3,4}$.

\begin{thm}     \label{th qutrit pure trade off mutually bases}
    For any qutrit state $\rho$, the imaginarity measure $M_{\mathrm{Re}}$ under the set of MUBs $\{Z_k\}_{k=1,2,3,4}$ satisfies
    \begin{eqnarray}   \label{eq compl qutrit pure state under complete mutually}
  \sum_{k=1}^4  \left(1-M_{\mathrm{Re}}^{Z_k}\right)^2 > \frac{31}{14}P(\rho).
    \end{eqnarray}
\end{thm}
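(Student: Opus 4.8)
The plan is to reduce the whole statement to a single quadratic inequality in the entries of $\rho$, constrained only by positive semidefiniteness. For each $k$ let $U_k$ be the unitary whose columns are the vectors of $Z_k$; then taking the real part in the basis $Z_k$ is the operation $\mathrm{Re}_{Z_k}(\rho)=\frac{1}{2}\big(\rho+V_k\rho^{*}V_k^{\dagger}\big)$ with $V_k:=U_kU_k^{T}$ a \emph{symmetric} unitary. A direct computation using $U_1=I$ and the Fourier-type columns of $Z_2,Z_3,Z_4$ gives the four $V_k$ in closed form: $V_1=I$, while $V_2,V_3,V_4$ each fix $|0\rangle$ and act on the $\{|1\rangle,|2\rangle\}$ sector as a swap carrying the phase $1,\omega^{2},\omega$ respectively (here $\omega=e^{2\pi\mathrm{i}/3}$). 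Since $1-M_{\mathrm{Re}}^{Z_k}(\rho)=F\big(\rho,\mathrm{Re}_{Z_k}(\rho)\big)$, this rewrites the left-hand side of \eqref{eq compl qutrit pure state under complete mutually} entirely through $\rho$ and the four explicit matrices $V_k$.

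Next I would lower-bound each summand by the elementary estimate $F^{2}(\rho,\sigma)\ge \mathrm{Tr}(\rho\sigma)$, which follows from $F(\rho,\sigma)=\|\sqrt{\rho}\sqrt{\sigma}\|_{\mathrm{tr}}\ge\|\sqrt{\rho}\sqrt{\sigma}\|_{2}=\sqrt{\mathrm{Tr}(\rho\sigma)}$. This yields $\big(1-M_{\mathrm{Re}}^{Z_k}\big)^{2}\ge \frac{1}{2}P(\rho)+\frac{1}{2}\mathrm{Tr}\big(\rho V_k\rho^{*}V_k^{\dagger}\big)$. Summing over $k=1,\dots,4$ and expanding in components, the cross terms proportional to $\rho_{01}\rho_{02}$ pick up the phases $1,\omega,\omega^{2}$ and cancel through $1+\omega+\omega^{2}=0$; what remains is a real quadratic form in the parameters of $\rho$. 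Concretely one obtains $\sum_{k}\big(1-M_{\mathrm{Re}}^{Z_k}\big)^{2}\ge L(\rho):=2P(\rho)+\frac{1}{2}\sum_k\mathrm{Tr}\big(\rho V_k\rho^{*}V_k^{\dagger}\big)$, and a short calculation expresses $L(\rho)-\frac{31}{14}P(\rho)$ as a quadratic form in the populations $\rho_{ii}$, the real parts $\mathrm{Re}(\rho_{ij})$, and the imaginary parts $\mathrm{Im}(\rho_{ij})$.

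It then remains to prove $L(\rho)\ge\frac{31}{14}P(\rho)$. The crucial feature of $L(\rho)-\frac{31}{14}P(\rho)$ is that \emph{every} monomial enters with a nonnegative coefficient except the two terms $-\frac{10}{7}\,\mathrm{Im}(\rho_{01})^{2}$ and $-\frac{10}{7}\,\mathrm{Im}(\rho_{02})^{2}$. This is exactly where positive semidefiniteness of $\rho$ must be consumed: the $2\times2$ principal minors give $\mathrm{Im}(\rho_{01})^{2}\le|\rho_{01}|^{2}\le\rho_{00}\rho_{11}$ and $\mathrm{Im}(\rho_{02})^{2}\le\rho_{00}\rho_{22}$, which I would substitute to dominate the negative part. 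Discarding the manifestly nonnegative real-part and $\mathrm{Im}(\rho_{12})$ contributions, the residue collapses to the perfect square $\frac{25}{14}\big(\rho_{00}-\frac{2}{5}(\rho_{11}+\rho_{22})\big)^{2}+\frac{17}{7}\rho_{11}\rho_{22}\ge 0$, which closes the argument. The value $\frac{31}{14}$ is pinned precisely by the discriminant of this quadratic vanishing.

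The main obstacle is this final step: the quadratic form is genuinely indefinite on its own, so the proof stands or falls on invoking $\rho\succeq 0$ in exactly the right strength, and the sharp constant emerges only because the minor bounds and the perfect-square completion match perfectly. I would also flag one honest subtlety uncovered by this analysis: all the inequalities above are simultaneously tight at the rank-one state $\sqrt{2/7}\,|0\rangle-\mathrm{i}\sqrt{5/7}\,|1\rangle$ (where $F^{2}=\mathrm{Tr}$ holds with equality because $\rho$ is pure, and the quadratic form attains its minimum), for which the left-hand side equals exactly $\frac{31}{14}$. Thus the method delivers $\sum_{k=1}^{4}\big(1-M_{\mathrm{Re}}^{Z_k}\big)^{2}\ge\frac{31}{14}P(\rho)$ with equality attained, which both certifies that $\frac{31}{14}$ is optimal and indicates that strictness in \eqref{eq compl qutrit pure state under complete mutually} holds away from this single extremal configuration.
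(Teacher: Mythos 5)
Your proof is correct and takes a genuinely different route from the paper's. The paper first treats pure states by writing $|\varphi\rangle=\sum_j r_je^{\mathrm{i}\alpha_j}|j\rangle$, computing $\bigl(1-M_{\mathrm{Re}}^{Z_k}\bigr)^2=\tfrac12\bigl(1+|\langle\varphi|\varphi^*\rangle_{Z_k}|^2\bigr)$ explicitly for each basis, bounding the three cosine cross terms by $-1$, and minimizing the resulting function of $(r_0,r_1,r_2)$ by Lagrange multipliers to get $\tfrac{31}{14}$; it then passes to mixed states via the spectral decomposition together with convexity of $M_{\mathrm{Re}}$ and of the square function, which is how the purity $P(\rho)$ enters. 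You instead work with arbitrary $\rho$ from the start: the identity $\mathrm{Re}_{Z_k}(\rho)=\tfrac12(\rho+V_k\rho^*V_k^\dagger)$ with $V_k=U_kU_k^T$, the sub-fidelity bound $F^2(\rho,\sigma)\ge\mathrm{Tr}(\rho\sigma)$, the cancellation $1+\omega+\omega^2=0$, and a PSD-constrained quadratic form. I have checked your computations: the four $V_k$ are as you describe, the coefficients of $L(\rho)-\tfrac{31}{14}P(\rho)$ are nonnegative except for $-\tfrac{10}{7}\,\mathrm{Im}(\rho_{01})^2-\tfrac{10}{7}\,\mathrm{Im}(\rho_{02})^2$, and after substituting the $2\times2$ minor bounds the residue is exactly $\tfrac{25}{14}\bigl(\rho_{00}-\tfrac25(\rho_{11}+\rho_{22})\bigr)^2+\tfrac{17}{7}\rho_{11}\rho_{22}$ plus manifestly nonnegative terms. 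Your approach buys a single uniform argument (no separate pure/mixed cases, no Lagrange optimization) and, because $F^2=\mathrm{Tr}(\rho\sigma)$ is an equality for pure $\rho$, it comes with a complete characterization of when the bound is saturated.

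The one point of friction is the strictness of the inequality, and here your analysis exposes a flaw in the theorem as stated rather than a gap in your argument. At $|\chi\rangle=\sqrt{2/7}\,|0\rangle-\mathrm{i}\sqrt{5/7}\,|1\rangle$ a direct computation gives $\bigl(1-M_{\mathrm{Re}}^{Z_1}\bigr)^2=\tfrac{29}{49}$ and $\bigl(1-M_{\mathrm{Re}}^{Z_k}\bigr)^2=\tfrac{53}{98}$ for $k=2,3,4$, so the sum equals $\tfrac{217}{98}=\tfrac{31}{14}$ exactly while $P=1$: the strict inequality \eqref{eq compl qutrit pure state under complete mutually} fails there. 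The paper's own proof makes the same error, replacing $r_0^2r_1^2\cos2(\alpha_0-\alpha_1)$ by $-r_0^2r_1^2$ with a strict ``$>$'' even though $\cos2(\alpha_0-\alpha_1)=-1$ is attainable, and the subsequent Lagrange minimum $\tfrac{31}{14}$ is achieved at precisely the amplitudes of $|\chi\rangle$. So the correct statement is ``$\ge$'' with equality on a small family of pure states; your proof establishes that version, and you are right to flag it.
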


Theorem \ref{th qutrit pure trade off mutually bases} shows the imaginarity under a complete set of MUBs is constrained by the purity of quantum states. This complementarity relation is illustrated in FIG. \ref{fig:The complement for 3} by 2000 randomly generated states numerically, from which it follows that the left-hand side converges to the right-hand side of Eq. (\ref{eq compl qutrit pure state under complete mutually}) as the quantum state evolves toward purity.

\begin{figure}[htbp]
    \centering
    \includegraphics[width=0.4\textwidth,height=0.2\textheight, keepaspectratio]{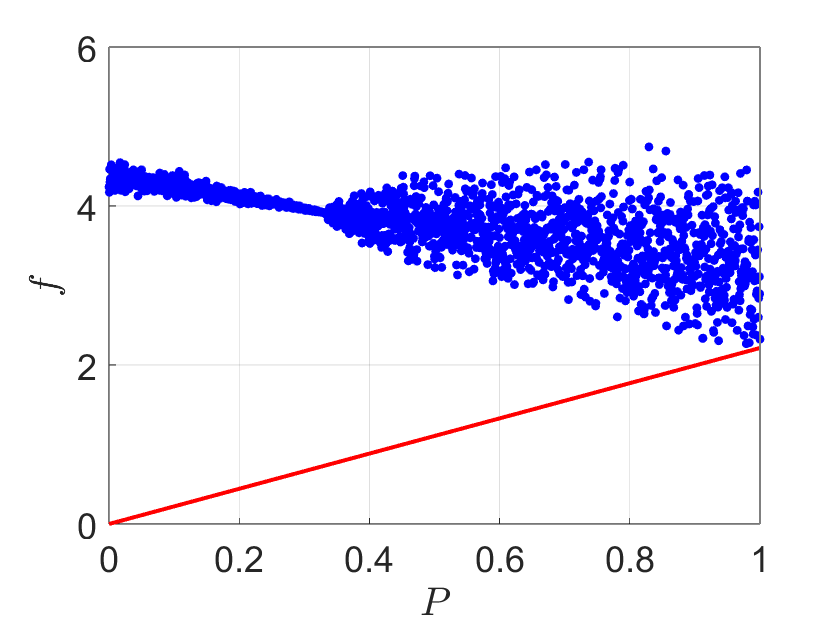}
    \caption{(Color Online) Visualization of the complementarity relation in Eq. \eqref{eq compl qutrit pure state under complete mutually} for 2000 randomly generated qutrits. The blue solid dots represent $ f=\sum\limits_{k=1}^4 \left(1-M_{\mathrm{Re}}^{Z_k}\right)^2$ corresponding to the left-hand side of Eq. \eqref{eq compl qutrit pure state under complete mutually} with respect to the purity $P$ for randomly generated qutrit states. The red line represents the function $\frac{31}{14}P$ corresponding to the right-hand side of Eq. \eqref{eq compl qutrit pure state under complete mutually}. }
    \label{fig:The complement for 3}
\end{figure}

The complementarity relations in Theorems \ref{th comple of qubit} and
\ref{th qutrit pure trade off mutually bases} tell us that for any given quantum state, the amounts of imaginarity can not be all large for the complete set of MUBs. More specifically, since the complementarity relation in Theorem \ref{th comple of qubit} is complete for pure state, so the amounts of imaginarity
under the MUBs $B_1$, $B_2$ and $B_3$ can not be all large or small.
This relation reflects the dependence of imaginarity on the measurement basis.

\section{Conclusions}
To summarize, we have developed a method to construct imaginary measures by real part state. This class of imaginary measures includes the existing imaginary measures in form of the real part state.
Specifically, we have proposed a new imaginarity measure in terms of fidelity and explored its properties. In qubit systems, we have derived the analytical expression for the imaginarity measure.
The relations between this imaginarity measure and some other imaginarity measures have been established.
The complementarity relation under a complete set of MUBs have also been established in low-dimensional systems. This work deepens the interpretation of the real part state on characterizing quantum states.
Meanwhile, it also contributes to characterizing the dependence of imaginarity on measurement bases.
\section{Acknowledgment}
M. J. Zhao thanks the center for Quantum Information,
Institute for Interdisciplinary Information Sciences,
Tsinghua University for hospitality.
This work is supported by the National Natural Science Foundation of China under grant Nos. 12171044 and 12471443, and Postgraduate Research \& Practice Innovation Program of Jiangsu Province (KYCX25\_0627).

\section{Appendix}
\appendixsection{The proof of Theorem \ref{th mea d}}

 First, the faithfulness (M1) is obvious.
Second, for any quantum state $\rho$ and real operation $\Phi$, it has
\begin{eqnarray*}
   M (\Phi(\rho))&=&D(\Phi(\rho),  {\rm Re}(\Phi(\rho)))\\
    &=& D(\Phi(\rho), \Phi( {\rm Re}(\rho)))\\
    &\leq& D(\rho,  {\rm Re}(\rho))\\
    &=&M (\rho),
\end{eqnarray*}
where the first equality is the definition in Eq. (\ref{def rd}), the second equality is the property of real operation $\Phi$, the inequality in the third line is the contractivity of the function $D$. So $M(\rho)$ satisfies the monotonicity (M2).
Third, for arbitrary probability distribution $\{p_i\}$ and density operators $\{\rho_i\}$,
\begin{eqnarray*}
 M(p_1\rho_1\oplus p_2\rho_2)&=&
D(p_1\rho_1\oplus p_2\rho_2, \mathrm{Re}(p_1\rho_1\oplus p_2\rho_2)) \\
&=& D(p_1\rho_1\oplus p_2\rho_2, p_1\mathrm{Re}(\rho_1)\oplus p_2\mathrm{Re}(\rho_2)) \\
&=&   p_1 D(\rho_1, \mathrm{Re}(\rho_1))+ p_2 D(\rho_2,  \mathrm{Re}(\rho_2))\\
&=&  p_1  M(\rho_1) +p_2  M (\rho_2),
\end{eqnarray*}
where the first equality is the definition of $M(\rho)$ in Eq. (\ref{def rd}), the second equality is the property of real part, the third equality is the additivity under direct sum states of the function $D$. So $M(\rho)$ satisfies the condition (M5). Therefore $M(\rho)$ is a well-defined imaginarity measure.

\appendixsection{The proof of  Theorem \ref{th properties of mre}}
(1) First, since $0 \leq F(\rho, \mathrm{Re}(\rho))\leq 1$ and $F(\rho, \mathrm{Re}(\rho))=1$ holds if and only if $\rho=\mathrm{Re}(\rho)$ \cite{Nielsen}, this implies that $M_{\mathrm{Re}}(\rho) \geq 0$ and $M_{\mathrm{Re}}(\rho)=0$ if and only if $\rho$ is real.
    Second, it is well known that the unique maximally imaginarity state is $|+ \rm i\rangle=\frac{|0\rangle + {\rm i} |1\rangle}{\sqrt{2}}$
   % and $|- \rm i\rangle=\frac{|0\rangle - {\rm i} |1\rangle}{\sqrt{2}}$
    for the imaginarity resource theory \cite{A.H.2018}. Hence, it follows that $M_{\mathrm{Re}}(\rho) \leq M_{\mathrm{Re}}(|+ \rm i\rangle \langle + \rm i|)=1-\frac{\sqrt{2}}{2}$.

    (2) Using $\mathrm{Re}(|\psi \rangle \langle \psi|)=\frac{1}{2}({|\psi\rangle\langle\psi|+|\psi^*\rangle\langle\psi^*|})$ and the expression for fidelity $F(|\varphi \rangle \langle \varphi|, \rho)=\sqrt{\langle\varphi|\rho|\varphi\rangle}$ with any pure state $|\varphi\rangle$ and any density operator $\rho$ \cite{Nielsen}, one obtains $M_{\mathrm{Re}}(|\psi\rangle)=1- \frac{1}{\sqrt{2}} \sqrt{1+ |\langle \psi|\psi^*\rangle|^2}$.

    (3) Recall that in \cite{WuKD20212} the geometric imaginarity of a pure state $|\psi\rangle$ is given as $M_{\mathrm{g}}(|\psi\rangle)=\frac{1}{2}({1- |\langle\psi^*|\psi\rangle|})$.
 %   \begin{equation}  \label{eq Mg of pure state}
  %      M_{\mathrm{g}}(|\psi\rangle)=\frac{1-%|\langle\psi^*|\psi\rangle|}{2}.
   % \end{equation}
    Combining with Eq. \eqref{eq Mre pure}, it immediately follows that $ M_{\mathrm{Re}}(|\psi\rangle)=1-\frac{1}{\sqrt{2}}\sqrt{{1+(1-2M_{\mathrm{g}}(|\psi\rangle))^2}}. $
 %   \begin{eqnarray*}
  %      M_{\mathrm{Re}}(|\psi\rangle)=1-\sqrt{\frac{1+(1-%2M_g(|\psi\rangle))^2}{2}}.
  %  \end{eqnarray*}

   (4) For the set $\{\Pi_i\}_{i=0}^{d-1}$ of rank-$1$ projection operators with real entries,
   %von Neumann projection
 employing the relation
   \begin{eqnarray*}
      \mathrm{Re}\left(\sum\limits_{i=0}^{d-1} \Pi_i\rho \Pi_i\right)=\sum\limits_{i=0}^{d-1} \Pi_i\mathrm{Re}(\rho) \Pi_i
    \end{eqnarray*}
    and
     \begin{eqnarray*}
        F\left( \sum_i \Pi_i \rho \Pi_i,  \sum_i \Pi_i \sigma \Pi_i\right) = \sum\limits_{i} \sqrt{p_i q_i} F\left(\rho_{i}, \sigma _{i}\right),
    \end{eqnarray*}
    with $\rho_{i}=\frac{1}{p_i}\Pi_i \rho \Pi_i$, $p_i=\mathrm{Tr}(\Pi_i \rho \Pi_i)$,  $\sigma_{i}=\frac{1}{q_i}\Pi_i \sigma \Pi_i$, $q_i=\mathrm{Tr}(\Pi_i \sigma \Pi_i)$ \cite{Nielsen} \cite{C.H.Xiong2019},
   we have
     \begin{eqnarray*}
        F\left( \sum_i \Pi_i \rho \Pi_i,  {\rm Re }\left(\sum_i \Pi_i \rho \Pi_i\right)\right) = \sum\limits_{i} p_i  F\left(\rho_{i}, {\rm Re }(\rho _{i})\right).
    \end{eqnarray*}
    Subsequently, we derive that $ M_{{\rm Re}}\left( \sum\limits_i \Pi_i \rho \Pi_i\right) = \sum\limits_{i} p_i  M_{{\rm Re}}  \left(\rho_{i}\right).$
%  \begin{eqnarray*}
 %      M_{{\rm Re}}\left( \sum_i \Pi_i \rho \Pi_i\right) = \sum\limits_{i} p_i  M_{{\rm Re}}  \left(\rho_{i}\right).
  %  \end{eqnarray*}

\appendixsection{The proof of Theorem \ref{th:4}}
 For any qubit state  $  \rho=\frac{1}{2} ({I}+\boldsymbol{r}\cdot\boldsymbol{\sigma})$
   with $\boldsymbol{r}=(r_x, r_y, r_z)\in \mathbb{R}^3$, $|\boldsymbol{r}|=\sqrt{r_x^2+ r_y^2+ r_z^2}\leq 1$, its real part state is $ \mathrm{Re}(\rho)=\frac{1}{2}({I}+r_x\sigma_x+r_z\sigma_z).    $
 %   \begin{eqnarray*}
  %  \mathrm{Re}(\rho)=\frac{1}{2}({I}+r_x\sigma_x+r_z\sigma_z).
   % \end{eqnarray*}
    Furthermore, it can be obtained  that $\mathrm{det}(\rho)=\frac{1-|\boldsymbol{r}|^2}{4}$, $\mathrm{det}(\mathrm{Re}(\rho))=\frac{1-r_x^2-r_z^2}{4}$ and $\mathrm{Tr}(\rho\mathrm{Re}(\rho))=\frac{1+r_x^2+r_z^2}{2}$. Since the fidelity between any qubit states $\rho$ and $\sigma$ is $ F(\rho,\sigma)^2=2\sqrt{\mathrm{det}(\rho)\mathrm{det}(\sigma)}+\mathrm{Tr}(\rho\sigma)$ \cite{L.Zhang2018},
   % \begin{equation}  \label{eq fidelity for qubit}
    %    F(\rho,\sigma)^2=2\sqrt{\mathrm{det}(\rho)\mathrm{det}(\sigma)}+\mathrm{Tr}(\rho\sigma),
    %\end{equation}
    so we get
    \begin{eqnarray*}
       F(\rho,\mathrm{Re}(\rho))=\frac{1}{\sqrt{2}}\Bigg[{\sqrt{(1-|\boldsymbol{r}|^2)(1-|\boldsymbol{r}|^2+r_y^2)}+1+|\boldsymbol{r}|^2-r_y^2} \Bigg]^\frac{1}{2},
    \end{eqnarray*}
   which gives rise to  $M_{\mathrm{Re}}(\rho)=1-\frac{1}{\sqrt{2}}\Bigg[{\sqrt{(1-|\boldsymbol{r}|^2)(1-|\boldsymbol{r}|^2+r_y^2)}+1+|\boldsymbol{r}|^2-r_y^2}\Bigg]^\frac{1}{2}$.

\appendixsection{The proof of Theorem \ref{theo:geo measure}}

  First, recalling the definition of fidelity, we know $  F(\rho,\rho^*)=\mathrm{Tr}\sqrt{\rho^{\frac{1}{2}}\rho^*\rho^{\frac{1}{2}}}$
    and $ F(\rho,\mathrm{Re}(\rho))=\frac{1}{\sqrt{2}}\mathrm{Tr}\sqrt{\rho^2+\rho^{\frac{1}{2}}\rho^*\rho^{\frac{1}{2}}}.$
   For $0 \leq p \leq 1$, the function $f(x)=x^p$ is operator concave \cite{E.Carlen2010}, i.e. for any quantum states $\rho_1, \rho_2 $, and $0 < t<1$, it has $ f\big((1-t)\rho_1+t\rho_2\big) \geq (1-t)f(\rho_1)+tf(\rho_2).$
 Let $t=\frac{1}{2}$ and $ p=\frac{1}{2}$, then we have $ \sqrt{\frac{\rho_1+\rho_2}{2}} \geq \frac{1}{2}\sqrt{\rho_1}+\frac{1}{2}\sqrt{\rho_2},$
  and the equality holds if and only if $\rho_1=\rho_2$.
   Therefore,
   \begin{equation}  \label{eq Tr}
       \mathrm{Tr}\sqrt{\frac{\rho_1+\rho_2}{2}} \geq \frac{1}{2}\mathrm{Tr}\sqrt{\rho_1}+\frac{1}{2}\mathrm{Tr}\sqrt{\rho_2}.
   \end{equation}
    By applying the inequality \eqref{eq Tr} and $\mathrm{Tr}(\rho)=1$, it follows that
    \begin{eqnarray*}
        F\big(\rho,\mathrm{Re}(\rho)\big) \geq \frac{1}{2}\big(1+F(\rho,\rho^*)\big).
    \end{eqnarray*}
    Therefore, by the analytical formula of the geometric imaginarity $ M_{\mathrm{g}}$ in Eq. (\ref{eq g im def}) and the definition of the imaginarity measure $   M_{\mathrm{Re}}$ in Eq. (\ref{eq def mre}), it is readily to get $ M_{\mathrm{Re}}(\rho) \leq M_{\mathrm{g}}(\rho)$
 %   \begin{eqnarray*}
  %       M_{\mathrm{Re}}(\rho) \leq M_{\mathrm{g}}(\rho),
   % \end{eqnarray*}
    and the equality holds if and only if $\rho=\rho^*$ i.e., $\rho$ is a real state.

    Second,  the relation $M_{\mathrm{g}}(\rho) \leq 1-[1- M_{\mathrm{Re}}(\rho)]^2$ is derived directly by $ F(\rho,\mathrm{Re}(\rho)) \leq \max\limits_{\sigma\in {\cal R}} F(\rho,\sigma)$.

\appendixsection{The proof of item (ii) in Theorem \ref{th mg real}}
For any qubit state
    \begin{eqnarray}\label{eq rho bloch}
       \rho=\frac{1}{2} (I+\boldsymbol{r}\cdot\boldsymbol{\sigma})=
     \frac{1}{2}
       \left( \begin{array}{cc}
     1+ r_z& r_x-{\rm i} r_y\\
    r_x+{\rm i} r_y & 1-r_z
    \end{array}
   \right),
  \end{eqnarray}
   with $\boldsymbol{r}=(r_x, r_y, r_z)\in \mathbb{R}^3$, $|\boldsymbol{r}|=\sqrt{r_x^2+ r_y^2+ r_z^2}\leq 1$,
its determinant is $\mathrm{det}(\rho) = \frac{1-\left| \boldsymbol{r} \right|^2}{4}$.
For any real state
\begin{equation}  \label{eq real state sigma}
   \sigma=\frac{1}{2} (I+x \sigma_x+z \sigma_z),
\end{equation}
its determinant is $\mathrm{det}(\sigma) = \frac{1-x^2-z^2}{4}$. By calculation, we get $\mathrm{Tr} (\rho \sigma) = \frac{1 + x  r_x + z r_z }{2}$. According to the relation $ F(\rho,\sigma)^2=2\sqrt{\mathrm{det}(\rho)\mathrm{det}(\sigma)}+\mathrm{Tr}(\rho\sigma)$ \cite{L.Zhang2018}, the fidelity between the qubit state $\rho$ in Eq. \eqref{eq rho bloch} and the real state $\sigma$ in Eq. \eqref{eq real state sigma} is
\begin{eqnarray*}
    F(\rho,\sigma)^2=\frac{1}{2}\sqrt{1-\left| \boldsymbol{r} \right|^2} \sqrt{1-x^2-z^2} + \frac{1}{2}(1+x r_x + z  r_z),
\end{eqnarray*}
where $x^2+z^2 \leq 1$. For the sake of brevity, we denote $ G=F(\rho,\sigma)^2$. Next we aim to find the maximum of the function $ G(x,z)=\frac{1}{2}\sqrt{1-\left| \boldsymbol{r} \right|^2} \sqrt{1-x^2-z^2} + \frac{1}{2}(1+x r_x + z  r_z) $
%\begin{equation}\label{eq app G}
 %   G(x,z)=\frac{1}{2}\sqrt{1-\left| \boldsymbol{r} \right|^2} \sqrt{1-x^2-z^2} + \frac{1}{2}(1+x r_x + z  r_z),
%\end{equation}
in the domain $\{ (x,z)|x^2+z^2 \leq 1 \}$.
We separate the discussion into two parts, one is on the boundary of the domain $\{ (x,z)|x^2+z^2 = 1 \}$, and the other is inside the domain $\{ (x,z)|x^2+z^2 < 1 \}$. On the boundary of the domain, by the Lagrange Multiplier Method, we find the maximum of $G$ is
$ \frac{1}{2}(1+\sqrt{r_x^2+r_z^2})$.
Inside the domain, by solving the stationary point of function $G$, we find
the maximum of $G$ is
$\frac{1}{2}(\sqrt{1-r_y^2}+1)$.
By comparing these two cases, we get the maximum of funtion $G$ is $\frac{1}{2}(\sqrt{1-r_y^2}+1)$ which is reached at $ (x_0,z_0)=(\frac{r_x}{\sqrt{1-r_y^2}}, \frac{r_z}{\sqrt{1-r_y^2}}). $
%\begin{equation}\label{eq app stationary}
 %   (x_0,z_0)=(\frac{r_x}{\sqrt{1-r_y^2}}, \frac{r_z}{\sqrt{1-r_y^2}}).
%\end{equation}
This shows $ \max\limits_{\sigma\in {\cal R}} F^2(\rho,\sigma)=\frac{1}{2}(\sqrt{1-r_y^2}+1)$
%\begin{eqnarray*}
 %   \max_{\sigma\in {\cal R}} F^2(\rho,\sigma)=\frac{1}{2}(\sqrt{1-r_y^2}+1)
%\end{eqnarray*}
and the optimal real state is
$\sigma_\star = \frac{1}{2}(I + x_0  \sigma_x + z_0  \sigma_z)$.

\appendixsection{The proof of Theorem \ref{thm:trace norm}}
\hypertarget{app:F}{} % ÃªµãÃû³Æ£ºapp:data

Before the proof of Theorem \ref{thm:trace norm}, we need two lemmas below.

\begin{lem} \label{thm:5}
    For any quantum state $\rho$, the imaginarity measure of $M_{\mathrm{Re}}$ is bounded from below by
    \begin{eqnarray*}
   M_{\mathrm{Re}}(\rho)\geq   1-\left[P(\rho)-\frac{1}{4}||\rho-\rho^*||^2_{\mathrm{H-S}}+\sqrt{\left(1-P(\rho)\right)\left(1-P(\rho)+\frac{1}{4}||\rho-\rho^*||^2_{\mathrm{H-S}}\right)} \right]^{\frac{1}{2}},
    \end{eqnarray*}
    and from above by
     \begin{eqnarray*}
      M_{\mathrm{Re}}(\rho)\leq 1-\left[P(\rho)-\frac{1}{4}||\rho-\rho^*||_{\rm{H-S}}^2+2\sqrt{E_2(\rho\mathrm{Re}(\rho))}\right]^{\frac{1}{2}},
   \end{eqnarray*}
    where the Hilbert-Schimidt norm is given by $||A||_{\mathrm{H-S}}=\sqrt{\mathrm{Tr}(A^{\dagger}A )}$, the purity by $P(\rho)=\mathrm{Tr}(\rho^2)$, and $E_2$ denotes the sum of the second-order principal minors of matrix.
    \end{lem}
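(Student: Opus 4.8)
Since $M_{\mathrm{Re}}(\rho)=1-F(\rho,\mathrm{Re}(\rho))$ by Eq.~(\ref{eq def mre}), a two-sided bound on $M_{\mathrm{Re}}$ is equivalent to a two-sided bound on the fidelity $F(\rho,\mathrm{Re}(\rho))$: the lower bound on $M_{\mathrm{Re}}$ corresponds to an upper bound on $F$, and the upper bound on $M_{\mathrm{Re}}$ to a lower bound on $F$. The plan is to recognize the two estimates as instances of the known super-fidelity and sub-fidelity bounds for the quantum fidelity, evaluated at the pair $(\rho,\mathrm{Re}(\rho))$, and then to rewrite the resulting trace quantities in terms of the purity $P(\rho)$ and the Hilbert--Schmidt norm $\|\rho-\rho^*\|_{\mathrm{H\text{-}S}}$.

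First I would record the elementary trace identities. Writing $\mu=\tfrac14\|\rho-\rho^*\|^2_{\mathrm{H\text{-}S}}$ and using $\rho^\dagger=\rho$, one expands $\|\rho-\rho^*\|^2_{\mathrm{H\text{-}S}}=\mathrm{Tr}\big((\rho-\rho^*)^2\big)=2P(\rho)-2\mathrm{Tr}(\rho\rho^*)$, so that $\mathrm{Tr}(\rho\rho^*)=P(\rho)-2\mu$. Combining this with $\mathrm{Re}(\rho)=\tfrac12(\rho+\rho^*)$ yields the two key quantities
\begin{equation*}
\mathrm{Tr}\big(\rho\,\mathrm{Re}(\rho)\big)=\mathrm{Tr}\big(\mathrm{Re}(\rho)^2\big)=P(\rho)-\mu ,
\end{equation*}
where the cross term $\mathrm{Tr}\big(\mathrm{Im}(\rho)\,\mathrm{Re}(\rho)\big)$ vanishes because it is the trace of a product of a real antisymmetric matrix and a real symmetric one. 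This already identifies the common leading term $P(\rho)-\tfrac14\|\rho-\rho^*\|^2_{\mathrm{H\text{-}S}}$ appearing in both bounds.

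For the lower bound on $M_{\mathrm{Re}}$ I would invoke the super-fidelity inequality $F(\rho,\sigma)^2\le \mathrm{Tr}(\rho\sigma)+\sqrt{(1-\mathrm{Tr}\rho^2)(1-\mathrm{Tr}\sigma^2)}$ with $\sigma=\mathrm{Re}(\rho)$. Substituting $\mathrm{Tr}\rho^2=P(\rho)$ and $\mathrm{Tr}(\rho\,\mathrm{Re}(\rho))=\mathrm{Tr}(\mathrm{Re}(\rho)^2)=P(\rho)-\mu$ gives $F^2\le (P(\rho)-\mu)+\sqrt{(1-P(\rho))(1-P(\rho)+\mu)}$, and passing to $1-\sqrt{\cdot}$ produces exactly the claimed lower bound. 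For the upper bound on $M_{\mathrm{Re}}$ I would use the sub-fidelity inequality $F(\rho,\sigma)^2\ge \mathrm{Tr}(\rho\sigma)+\sqrt{2\big[(\mathrm{Tr}\rho\sigma)^2-\mathrm{Tr}((\rho\sigma)^2)\big]}$; the crucial step is identifying the radicand. Setting $X=\sqrt{\rho}\,\mathrm{Re}(\rho)\sqrt{\rho}\succeq 0$ with eigenvalues $\lambda_i$, and noting that $X$ shares its characteristic polynomial with $\rho\,\mathrm{Re}(\rho)$, one gets $(\mathrm{Tr}\,X)^2-\mathrm{Tr}(X^2)=2\sum_{i<j}\lambda_i\lambda_j=2E_2(\rho\,\mathrm{Re}(\rho))$. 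This reduces the sub-fidelity bound to $F^2\ge (P(\rho)-\mu)+2\sqrt{E_2(\rho\,\mathrm{Re}(\rho))}$, from which the upper bound follows.

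Alternatively, the sub-fidelity step can be obtained from scratch in an elementary way: writing $F=\mathrm{Tr}\sqrt{X}=\sum_i\sqrt{\lambda_i}$ one has $F^2=\sum_i\lambda_i+2\sum_{i<j}\sqrt{\lambda_i\lambda_j}\ge \mathrm{Tr}(X)+2\sqrt{\sum_{i<j}\lambda_i\lambda_j}$, the inequality being the $\ell_1\ge\ell_2$ estimate applied to the nonnegative vector $(\sqrt{\lambda_i\lambda_j})_{i<j}$. I expect the main obstacle to be not any single inequality but the bookkeeping that matches the operator quantities $\mathrm{Tr}(\rho\,\mathrm{Re}(\rho))$, $\mathrm{Tr}(\mathrm{Re}(\rho)^2)$ and $E_2(\rho\,\mathrm{Re}(\rho))$ to the purity and Hilbert--Schmidt norm in precisely the stated form, together with justifying the replacement of $\rho\,\mathrm{Re}(\rho)$ by $X$ when $\rho$ is singular, which is handled by the equality of the characteristic polynomials of $AB$ and $BA$ for $A=\sqrt{\rho}$, $B=\sqrt{\rho}\,\mathrm{Re}(\rho)$.
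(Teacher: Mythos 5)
Your proof is correct and follows essentially the same route as the paper: both invoke the super- and sub-fidelity bounds of Miszczak et al.\ for the pair $(\rho,\mathrm{Re}(\rho))$ and then reduce $\mathrm{Tr}(\rho\,\mathrm{Re}(\rho))=\mathrm{Tr}(\mathrm{Re}(\rho)^2)=P(\rho)-\tfrac14\|\rho-\rho^*\|^2_{\mathrm{H\text{-}S}}$ and $(\mathrm{Tr}\,A)^2-\mathrm{Tr}(A^2)=2E_2(A)$ for $A=\rho\,\mathrm{Re}(\rho)$. Your self-contained $\ell_1\ge\ell_2$ derivation of the sub-fidelity step is a nice bonus but does not change the substance of the argument.
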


\begin{proof}
    In \cite{J. A. Miszczak},
one can obtain that the fidelity $F(\rho, \mathrm{Re}(\rho))$ is bounded by
 \begin{equation}   \label{eq the bound of F}
      \left[L\left(\rho,\mathrm{Re}(\rho)\right)\right]^{\frac{1}{2}} \leq F(\rho,\mathrm{Re}(\rho)) \leq   [U(\rho,\mathrm{Re}(\rho))]^{\frac{1}{2}}
    \end{equation}
with
\begin{equation}  \label{eq U1}
     U(\rho,\sigma)= \mathrm{Tr}(\rho\sigma)+\sqrt{(1-\mathrm{Tr}(\rho^2))\left(1-\mathrm{Tr}\left(\sigma^2\right)\right)}
    \end{equation}
and
 \begin{eqnarray*}
        L(\rho,\sigma)= \mathrm{Tr}(\rho\sigma)+\sqrt{2}\sqrt{\left(\mathrm{Tr}(\rho\sigma)\right)^2-\mathrm{Tr}\left(\rho\sigma\rho\sigma\right)}.
    \end{eqnarray*}
The trace $\mathrm{Tr}(\rho\mathrm{Re}(\rho))$ can be rewritten as $ \mathrm{Tr}(\rho\mathrm{Re}(\rho))=\frac{1}{2} \left({\rm Tr} (\rho^2)+ {\rm Tr} (\rho \rho^*)\right)={\rm Tr} (\rho^2)+  {\rm Tr} \left(({\rm Im} (\rho))^2\right)=P(\rho)-\frac{1}{4}||\rho-\rho^*||^2_{\mathrm{H-S}}.$
%\begin{equation*}
%\begin{array}{rcl}
 %   \mathrm{Tr}(\rho\mathrm{Re}(\rho))&=&\frac{1}{2} \left({\rm Tr} (\rho^2)+ {\rm Tr} (\rho \rho^*)\right)\\
 %   &=&{\rm Tr} (\rho^2)+  {\rm Tr} \left(({\rm Im} (\rho))^2\right)\\
  %  &=&P(\rho)-\frac{1}{4}||\rho-\rho^*||^2_{\mathrm{H-S}}.
   % \end{array}
%\end{equation*}
 Inserting this relation into $ U(\rho,\mathrm{Re}(\rho))$ and $ L(\rho,\mathrm{Re}(\rho))$,
we get
     \begin{eqnarray}\label{eq u rho re}
     U(\rho,\mathrm{Re}(\rho))=P(\rho)-\frac{1}{4}||\rho-\rho^*||^2_{\mathrm{H-S}}+\sqrt{\left(1-P(\rho)\right)\left[1-P(\rho)+\frac{1}{4}||\rho-\rho^*||^2_{\mathrm{H-S}}\right]},
    \end{eqnarray}
and
 \begin{eqnarray}   \label{eq:upper}
        L(\rho,\mathrm{Re}(\rho))= P(\rho)-\frac{1}{4}||\rho-\rho^*||_{\rm{H-S}}^2+2\sqrt{E_2(\rho\mathrm{Re}(\rho))},
    \end{eqnarray}
where we have used the relation $(\mathrm{Tr}A)^2=\mathrm{Tr}(A^2)+2E_2(A)$ for $ L(\rho,\mathrm{Re}(\rho))$,
$E_2(A)$ denotes the sum of the second-order principal minors of matrix $A$ \cite{R.A.Horn}.
    Thus the lower and upper bounds of $ M_{\mathrm{Re}}$ can be derived.
%    \begin{equation*}
 %       1-\left[P(\rho)-\frac{1}{4}||\rho-\rho^*||^2_{\mathrm{H-S}}+\sqrt{\left(1-P(\rho)\right)\left(1-P(\rho)+\frac{1}{4}||\rho-\rho^*||^2_{\mathrm{H-S}}\right)} \right]^{\frac{1}{2}}
  %  \end{equation*}
   % and
    %\begin{equation*}
     %   1-\left[P(\rho)-\frac{1}{4}||\rho-\rho^*||_{\rm{H-S}}^2+2\sqrt{E_2(\rho\mathrm{Re}(\rho))}\right]^{\frac{1}{2}}
    %\end{equation*}
    %respectively.
\end{proof}

\begin{lem}  \label{le upper bound of Mg}
    For any density operator $\rho$ in a $d$-dimensional Hilbert space, we have
    \begin{equation} \label{eq maxU}
                  \max\limits_{\sigma \in \mathcal{R}}U(\rho, \sigma)\leq
   \frac{1}{d}+\frac{1}{2d}\sqrt{4(d-1)^2-2d(d-1)\left(\mathrm{Tr}(\rho^2)-\mathrm{Tr}(\rho \rho^*)\right)}.
    \end{equation}
\end{lem}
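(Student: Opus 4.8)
The plan is to turn the matrix optimization over real states into a one‑dimensional real optimization and then maximize explicitly. Recall from Eq. \eqref{eq U1} that $U(\rho,\sigma)=\mathrm{Tr}(\rho\sigma)+\sqrt{(1-\mathrm{Tr}(\rho^2))(1-\mathrm{Tr}(\sigma^2))}$. The first point I would establish is that on $\mathcal{R}$ the linear term only sees the real part of $\rho$: since any real state $\sigma$ is symmetric ($\sigma=\sigma^{T}$) while $\mathrm{Im}(\rho)=\frac{1}{2\mathrm{i}}(\rho-\rho^{*})$ is real and antisymmetric, $\mathrm{Tr}(\mathrm{Im}(\rho)\sigma)=0$, so $\mathrm{Tr}(\rho\sigma)=\mathrm{Tr}(\mathrm{Re}(\rho)\sigma)$. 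Here $\mathrm{Re}(\rho)$ is itself real symmetric, which is what the next step exploits.

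Next I would expand both matrices around the maximally mixed state: write $\mathrm{Re}(\rho)=\tfrac{I}{d}+\Delta$ and $\sigma=\tfrac{I}{d}+\tilde\sigma$ with $\Delta,\tilde\sigma$ symmetric and traceless. Then $\mathrm{Tr}(\mathrm{Re}(\rho)\sigma)=\tfrac{1}{d}+\mathrm{Tr}(\Delta\tilde\sigma)$ and $\mathrm{Tr}(\sigma^{2})=\tfrac{1}{d}+\|\tilde\sigma\|_{\mathrm{H-S}}^{2}$, and the Cauchy--Schwarz inequality for the Hilbert--Schmidt inner product gives $\mathrm{Tr}(\Delta\tilde\sigma)\le\|\Delta\|_{\mathrm{H-S}}\,\|\tilde\sigma\|_{\mathrm{H-S}}$. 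Writing $t=\|\tilde\sigma\|_{\mathrm{H-S}}$, every real state obeys $\mathrm{Tr}(\sigma^{2})=\tfrac{1}{d}+t^{2}\le 1$, hence $t\in[0,\sqrt{(d-1)/d}\,]$, and for every such $\sigma$
\begin{equation*}
U(\rho,\sigma)\le \frac{1}{d}+\|\Delta\|_{\mathrm{H-S}}\,t+\sqrt{(1-\mathrm{Tr}(\rho^2))\Bigl(1-\tfrac{1}{d}-t^{2}\Bigr)}=:h(t).
\end{equation*}
Maximizing $h$ over this interval therefore upper‑bounds $\max_{\sigma\in\mathcal{R}}U(\rho,\sigma)$.

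I would then carry out the scalar maximization. The function $h$ is concave with $h'(0)=\|\Delta\|_{\mathrm{H-S}}>0$ (the degenerate case $\Delta=0$ being trivial) and $h'\to-\infty$ at the right endpoint, so the maximizer is the unique stationary point; solving $h'(t)=0$ yields $t_{\star}^{2}=\frac{(1-1/d)\,\|\Delta\|_{\mathrm{H-S}}^{2}}{\|\Delta\|_{\mathrm{H-S}}^{2}+1-\mathrm{Tr}(\rho^2)}$, and substitution collapses the expression to $h(t_{\star})=\tfrac{1}{d}+\sqrt{\bigl(1-\tfrac{1}{d}\bigr)\bigl(\|\Delta\|_{\mathrm{H-S}}^{2}+1-\mathrm{Tr}(\rho^2)\bigr)}$. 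The last step is the algebraic simplification through the identity $\mathrm{Tr}(\mathrm{Re}(\rho)^{2})=\tfrac{1}{2}\bigl(\mathrm{Tr}(\rho^{2})+\mathrm{Tr}(\rho\rho^{*})\bigr)$, which gives $\|\Delta\|_{\mathrm{H-S}}^{2}=\mathrm{Tr}(\mathrm{Re}(\rho)^{2})-\tfrac{1}{d}$ and hence $\|\Delta\|_{\mathrm{H-S}}^{2}+1-\mathrm{Tr}(\rho^2)=\tfrac{d-1}{d}-\tfrac{1}{2}\bigl(\mathrm{Tr}(\rho^{2})-\mathrm{Tr}(\rho\rho^{*})\bigr)$; multiplying out reproduces exactly $\tfrac{1}{d}+\tfrac{1}{2d}\sqrt{4(d-1)^{2}-2d(d-1)(\mathrm{Tr}(\rho^2)-\mathrm{Tr}(\rho\rho^*))}$.

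I expect the main obstacle to be bookkeeping rather than anything conceptual: establishing the replacement $\mathrm{Tr}(\rho\sigma)=\mathrm{Tr}(\mathrm{Re}(\rho)\sigma)$ cleanly, and above all executing the final simplification that converts $\|\Delta\|_{\mathrm{H-S}}^{2}+1-\mathrm{Tr}(\rho^2)$ into the stated radicand, where the identity for $\mathrm{Tr}(\mathrm{Re}(\rho)^{2})$ is the decisive ingredient. It is worth emphasizing that the Cauchy--Schwarz step is a genuine relaxation (not every direction $\Delta/\|\Delta\|$ is realizable together with positivity of $\sigma$), which is precisely why the conclusion is an inequality; this is exactly the form the lemma claims, so no tightness argument is needed.
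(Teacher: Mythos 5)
Your proof is correct, and it reaches exactly the stated bound. Both you and the paper work with the same relaxation — maximizing $U(\rho,\sigma)$ over real symmetric unit-trace matrices while ignoring positivity, which is why only an inequality results — but the execution differs. The paper writes $U(\rho,\sigma)$ as a function of all entries $x_{ij}$ of $\sigma$ and invokes the Lagrange multiplier method on the constraints $\sum_i x_{ii}=1$, $x_{ij}=x_{ji}$, stating the resulting maximum without intermediate steps. You instead first observe that $\mathrm{Tr}(\rho\sigma)=\mathrm{Tr}(\mathrm{Re}(\rho)\sigma)$ (since $\mathrm{Im}(\rho)$ is real antisymmetric and $\sigma$ symmetric), center both matrices at $I/d$, and use Cauchy--Schwarz to collapse the problem to a one-variable concave maximization in $t=\|\sigma-I/d\|_{\mathrm{H\text{-}S}}$; the identity $\mathrm{Tr}(\mathrm{Re}(\rho)^2)=\frac{1}{2}(\mathrm{Tr}(\rho^2)+\mathrm{Tr}(\rho\rho^*))$ then delivers the radicand. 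Your route is more transparent and fully verifiable where the paper's is a black box, and it makes explicit why the bound need not be tight (the Cauchy--Schwarz direction combined with the optimal norm need not yield a positive semidefinite $\sigma$). The only cosmetic caveat: when $\rho$ is pure the square-root term in $h$ vanishes and $h$ is linear, so the maximizer sits at the endpoint $t^2=(d-1)/d$ rather than at an interior stationary point; your formula for $t_\star^2$ degenerates to exactly that endpoint, so the final expression remains valid, but it is worth saying so rather than relying on $h'\to-\infty$ there.
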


\begin{proof}
%Let $\{|i\rangle\}_{i=0}^{d-1}$ be a fixed basis of a $d$-dimensional Hilbert space.
For any quantum state $\rho = \sum\limits_{i,j=0}^{d-1} \rho_{ij} |i\rangle\langle j|$  and any real state $\sigma = \sum\limits_{i,j=0}^{d-1} x_{ij} |i\rangle\langle j|$ with $ x_{ij} \in \mathbb{R}$, we have
\begin{equation}  \label{eq U}
    \begin{split}
      U(\rho,\sigma)&=\sum\limits_{i,j=0}^{d-1}\rho_{ij}x_{ji}+\sqrt{1-\mathrm{Tr}\rho^2}\sqrt{1-\sum\limits_{i,j=0}^{d-1}x_{ij}^2}\\
      &:=f(x_{00},x_{01}, ..., x_{0,d-1},x_{10},x_{11},...,x_{1,d-1}, ...,x_{d-1,d-1}).
    \end{split}
\end{equation}
We estimate the maximum of $f$ with the constraints $ \sum\limits_{i=0}^{d-1} x_{ii}=1$ and $x_{ij}=x_{ji}$ for $\forall i,j$.
By the Lagrange multiplier method, we find the maximum is $\frac{1}{d}+\frac{1}{2d}\sqrt{4(d-1)^2-2d(d-1)\left(\mathrm{Tr}(\rho^2)-\mathrm{Tr}(\rho \rho^*)\right)}$. Therefore $
    \max\limits_{\sigma \in \mathcal{R}}U(\rho, \sigma)\leq
    \frac{1}{d}+\frac{1}{2d}\sqrt{4(d-1)^2-2d(d-1)\left(\mathrm{Tr}(\rho^2)-\mathrm{Tr}(\rho \rho^*)\right)}$.
\end{proof}

Now we are ready to prove Theorem \ref{thm:trace norm}.

   First, we can get $  F\left(\rho,\mathrm{Re}(\rho)\right) \geq \left[P(\rho)-\frac{1}{4}||\rho-\rho^*||_{\mathrm{H-S}}^2\right]^{\frac{1}{2}}$
%    \begin{eqnarray*}
 %       F\big(\rho,\mathrm{Re}(\rho)\big) \geq \left[P(\rho)-\frac{1}{4}||\rho-\rho^*||_{\mathrm{H-S}}^2\right]^{\frac{1}{2}},
  %  \end{eqnarray*}
   by Eq. \eqref{eq:upper}.
Given the relation between the trace norm and Hilbert-Schmidt norm $||A||_{\mathrm{H-S}}^2 \leq ||A||_{\mathrm{tr}}^2$,
   we know $ F\left(\rho,\mathrm{Re}(\rho)\right) \geq \left[P(\rho)-M_{\mathrm{tr}}(\rho)^2\right]^{\frac{1}{2}}.$
%\begin{eqnarray*}
 %       F\big(\rho,\mathrm{Re}(\rho)\big) \geq [P(\rho)-M_{\mathrm{tr}}(\rho)^2]^{\frac{1}{2}}.
  %  \end{eqnarray*}
This yields to $M_{\mathrm{Re}}(\rho) \leq 1-\left(P(\rho)-M_{\mathrm{tr}}(\rho)^2\right)^{\frac{1}{2}}.$
 %   \begin{equation}   \label{eq Mre purity Mtr}
  %      M_{\mathrm{Re}}(\rho) \leq 1-\big(P(\rho)-M_{\mathrm{tr}}(\rho)^2\big)^{\frac{1}{2}}.
   % \end{equation}

    Second, we know
    \begin{equation} \label{eq max F upper}
        F(\rho,\mathrm{Re}(\rho)) \leq \left[\frac{1}{d}+\frac{1}{2d}\sqrt{4(d-1)^2-2d(d-1)\left(\mathrm{Tr}(\rho^2)-\mathrm{Tr}(\rho \rho^*)\right)}\right]^{\frac{1}{2}}
    \end{equation}
by Lemma \ref{le upper bound of Mg}. Inserting the relation ${\rm Tr} (\rho^2)-{\rm Tr} (\rho \rho^*)=\frac{1}{2}||\rho-\rho^*||^2_{\mathrm{H-S}}$ into Eq. \eqref{eq max F upper} and employing $\frac{1}{d}||A||_{\mathrm{tr}}^2 \leq ||A||_{\mathrm{H-S}}^2$ for a \( d \times d \) matrix \( A \), we can obtain  $ M_{\mathrm{Re}}(\rho) \geq 1-\left[\frac{1}{d}+\frac{1}{d}\sqrt{(d-1)^2-(d-1)M_{\mathrm{tr}}(\rho)^2}\right]^{\frac{1}{2}}. $
   % \begin{equation} \label{eq Mre lower2}
  %    M_{\mathrm{Re}}(\rho) \geq 1-\left[\frac{1}{d}+\frac{1}{d}\sqrt{(d-1)^2-(d-1)M_{\mathrm{tr}}(\rho)^2}\right]^{\frac{1}{2}}.
  %  \end{equation}

\appendixsection{The proof of Theorem \ref{th comple of qubit}}
%\hypertarget{app:G}{} % ÃªµãÃû³Æ£ºapp:data

    For any qubit state $\rho$, its Bloch representations under three MUBs $B_1$, $B_2$ and $B_3$, are
\begin{equation*}
\begin{array}{rcl}
\frac{1}{2}\left( \begin{array}{cc}
         1+ r_z& r_x-{\rm i} r_y\\
    r_x+{\rm i} r_y & 1-r_z
    \end{array}
   \right),\ \
   \frac{1}{2}\left( \begin{array}{cc}
     1+ r_y& r_z-{\rm i} r_x\\
r_z+{\rm i} r_x & 1-r_y
    \end{array}
   \right),\ \
      \frac{1}{2}\left( \begin{array}{cc}
     1+ r_x& r_y-{\rm i} r_z\\
r_y+{\rm i} r_z & 1-r_x
    \end{array}
   \right),
   \end{array}
\end{equation*}
respectively,
with $|\boldsymbol{r}|^2 \leq 1$. By Theorem \ref{th:4}, the imaginarity of qubit state $\rho$   under the orthonormal bases $B_1$, $B_2$ and $B_3$ are
\begin{eqnarray}  \label{eq MReX1}
    M_{\mathrm{Re}}^{B_1}(\rho)=1-\frac{1}{\sqrt{2}}\Bigg[{\sqrt{(1-|\boldsymbol{r}|^2)(1-|\boldsymbol{r}|^2+r_y^2)}+1+|\boldsymbol{r}|^2-r_y^2}\Bigg]^\frac{1}{2},
\end{eqnarray}
\begin{eqnarray}   \label{eq MReX2}
    M_{\mathrm{Re}}^{B_2}(\rho)=1-\frac{1}{\sqrt{2}}\Bigg[{\sqrt{(1-|\boldsymbol{r}|^2)(1-|\boldsymbol{r}|^2+r_x^2)}+1+|\boldsymbol{r}|^2-r_x^2}\Bigg]^\frac{1}{2},
\end{eqnarray}
and
\begin{eqnarray}   \label{eq MReX3}
    M_{\mathrm{Re}}^{B_3}(\rho)=1-\frac{1}{\sqrt{2}}\Bigg[{\sqrt{(1-|\boldsymbol{r}|^2)(1-|\boldsymbol{r}|^2+r_z^2)}+1+|\boldsymbol{r}|^2-r_z^2}\Bigg]^\frac{1}{2},
\end{eqnarray}
respectively. Therefore, it can be obtained that
\begin{eqnarray*}
    \left(1-M_{\mathrm{Re}}^{B_1}\right)^2+\left(1-M_{\mathrm{Re}}^{B_2}\right)^2+\left(1-M_{\mathrm{Re}}^{B_3}\right)^2 \geq \frac{\sqrt{(1-|\boldsymbol{r}|^2)(3-2|\boldsymbol{r}|^2)}+3+2|\boldsymbol{r|^2}}{2},
\end{eqnarray*}
by utilizing inequality $\sqrt{x+y} \leq \sqrt{x}+\sqrt{y}$ for $x,y \in \mathbb{R^+} \cup \{0\}$. The inequality becomes equality if and only if  $|\boldsymbol{r}|=1$, that is, $\rho$ is pure.

\appendixsection{The proof of Theorem \ref{th qutrit pure trade off mutually bases}}
%\hypertarget{app:H}{} % ÃªµãÃû³Æ£ºapp:data

 First, we show the relation in Eq. (\ref{eq compl qutrit pure state under complete mutually}) for pure state case. For any pure state
    $|\varphi \rangle = r_0e^{\mathrm{i}\alpha_0}|0\rangle+ r_1e^{\mathrm{i}\alpha_1} |1\rangle + r_2e^{\mathrm{i}\alpha_2} |2\rangle $ with $r_i\geq 0$, $r_0^2+r_1^2+r_2^2=1$,
     the imaginaries of $|\varphi \rangle$ under MUBs $Z_1$, $Z_2$, $Z_3$ and $Z_4$ are
  %  {\renewcommand{\arraystretch}{1.5} % ÐÐ¸ß±äÎªÔ­À´µÄ 1.5 ±¶
    \begin{eqnarray*}
    \begin{array}{rcl}
       M_{\mathrm{Re}}^{Z_1}&=& 1-\frac{1}{\sqrt{2}}\sqrt{{1+\left|r_0^2 e^{2\mathrm{i}\alpha_0}+r_1^2 e^{2\mathrm{i}\alpha_1}+r_2^2 e^{2\mathrm{i}\alpha_2})\right|^2}},\\
       M_{\mathrm{Re}}^{Z_2}&=& 1-\frac{1}{\sqrt{2}}\sqrt{{1+\left|r_0^2 e^{2\mathrm{i} \alpha_0} + 2r_1r_2e^{\mathrm{i} (\alpha_1 + \alpha_2)} \right|^2}},\\
       M_{\mathrm{Re}}^{Z_3}&=& 1-\frac{1}{\sqrt{2}}\sqrt{{1+\left|r_0^2e^{2\mathrm{i} \alpha_0} + 2 \alpha_3 r_1r_2 e^{\mathrm{i} (\alpha_1 + \alpha_2)} \right|^2}},\\
       M_{\mathrm{Re}}^{Z_4}&=& 1-\frac{1}{\sqrt{2}}\sqrt{{1+\left|r_0^2e^{2\mathrm{i} \alpha_0} + 2 \alpha_3^2r_1r_2 e^{\mathrm{i} (\alpha_1 + \alpha_2)} \right|^2}},
    \end{array}
    \end{eqnarray*}
    according to Eq. \eqref{eq Mre pure}. This gives rise to
    \begin{eqnarray*}
    \begin{array}{rcl}
        &&\left(1-M_{\mathrm{Re}}^{Z_1}\right)^2+\left(1-M_{\mathrm{Re}}^{Z_2}\right)^2+\left(1-M_{\mathrm{Re}}^{Z_3}\right)^2+\left(1-M_{\mathrm{Re}}^{Z_4}\right)^2 \\
        &=&
        r_0^2 r_1^2 \cos2(\alpha_0-\alpha_1)+r_0^2 r_2^2 \cos2(\alpha_0-\alpha_2)+r_1^2 r_2^2\cos2(\alpha_1-\alpha_2)\\
        && +2+2r_0^4+\frac{1}{2}\left(r_1^4+r_2^4\right)+6r_1^2 r_2^2 \\
        &>& 2+2r_0^4+\frac{1}{2}\left(r_1^4+r_2^4\right)+5r_1^2 r_2^2-r_0^2 r_1^2 -r_0^2 r_2^2.
    \end{array}
    \end{eqnarray*}
In light of the minimum of the function $f(r_0, r_1, r_2)=2+2r_0^4+\frac{1}{2}\left(r_1^4+r_2^4\right)+5r_1^2 r_2^2-r_0^2 r_1^2 -r_0^2 r_2^2$ under the constraint $r_0^2+r_1^2+r_2^2=1$ is $\frac{31}{14}$ by the Lagrange Multiplier Method.
Therefore we have $ \sum\limits_{k=1}^4 \left(1-M_{\mathrm{Re}}^{Z_k}\right)^2 > \frac{31}{14}$ for any pure state.

Second, for any mixed state $\rho$, consider the spectral decomposition $\rho=\sum \limits_{j=1}^3 \lambda_j |\psi_j\rangle\langle\psi_j|$ with eigenvalues $\{\lambda_j\}$ and eigenbasis $\{|\psi_j\rangle\}$. Then the imaginaries of $\rho$ under  MUBs $Z_1$, $Z_2$, $Z_3$ and $Z_4$ satisfies
    {\renewcommand{\arraystretch}{1.6} % ÐÐ¸ß±äÎªÔ­À´µÄ 1.5 ±¶
    \begin{eqnarray*}
    \begin{array}{rcl}
       && \sum\limits_{k=1}^4 \left(1-M_{\mathrm{Re}}^{Z_k}(\rho)\right)^2 \\
       & \geq & \sum\limits_{k=1}^4  \left(1-\sum \limits_{j=1}^3 \lambda_j M_{\mathrm{Re}}^{Z_k}(|\psi_j\rangle\langle\psi_j|)\right)^2 \\
       & \geq &  \sum \limits_{j=1}^3 \lambda_j^2 \sum\limits_{k=1}^4  \left(1-M_{\mathrm{Re}}^{Z_k}(|\psi_j\rangle\langle\psi_j|)\right)^2 \\
       & > & \frac{31}{14} \sum \limits_{j=1}^3 \lambda_j^2 = \frac{31}{14}P(\rho),
    \end{array}
    \end{eqnarray*}}
    where the first and second inequalities are derived from the convexity of $M_{\mathrm{Re}}$ and the square function respectively.

\end{document}